\newtheorem{thm}{Theorem}[section]
\newtheorem{lem}[thm]{Lemma}
\newtheorem{assumption}[thm]{Assumption}
\newtheorem{definition}[thm]{Definition}
\newtheorem{example}[thm]{Example}
\newenvironment{exmp}{\begin{example}\rm}{\end{example}}
\newtheorem{remark}[thm]{Remark}
\newenvironment{rem}{\begin{remark}\rm}{\end{remark}}
\newtheorem{algorithm}[thm]{Algorithm}
\newenvironment{algo}{\begin{algorithm}\rm}{\end{algorithm}}
\title{Feedback Capacity of Stationary Gaussian Channels Further Examined~\footnote{Results in this paper have been partially presented in the 2017 IEEE ISIT~\cite{liu17}.}}
\author{\small \begin{tabular}{ccc}
Tao Liu & Guangyue Han\\
The University of Hong Kong & The University of Hong Kong\\
email:  liutao159003@gmail.com & email: ghan@hku.hk\\
\end{tabular}}
\date{\today}
\begin{document} \maketitle

\begin{abstract}
It is well known that the problem of computing the feedback capacity of a stationary Gaussian channel can be recast as an infinite-dimensional optimization problem; moreover, necessary and sufficient conditions for the optimality of a solution to this optimization problem have been characterized, and based on this characterization, an explicit formula for the feedback capacity has been given for the case that the noise is a first-order autoregressive moving-average Gaussian process. In this paper, we further examine the above-mentioned infinite-dimensional optimization problem. We prove that unless the Gaussian noise is white, its optimal solution is unique, and we propose an algorithm to recursively compute the unique optimal solution, which is guaranteed to converge in theory and features an efficient implementation for a suboptimal solution in practice. Furthermore, for the case that the noise is a $k$-th order autoregressive moving-average Gaussian process, we give a relatively more explicit formula for the feedback capacity; more specifically, the feedback capacity is expressed as a simple function evaluated at a solution to a system of polynomial equations, which is amenable to numerical computation for the cases $k=1, 2$ and possibly beyond.
\end{abstract}

\section{Introduction}

We consider the following additive Gaussian channel with feedback
\begin{equation} \label{gc}
Y_i=X_i(M, Y_1^{i-1})+Z_i, \quad i=1,2,\dots
\end{equation}
where $M$ denotes the message to be communicated through the channel, the noise $\{Z_i\}$, which is independent of $M$, is a zero mean stationary Gaussian process, and $X_i$, the channel input at time $i$, may depend on $M$ and previous channel outputs $Y_1^{i-1}$. And we assume the channel input $\{X_i\}$ satisfies the following average power constraint:
there is $P > 0$ such that for all $n$,
$$
\frac{1}{n} \sum_{i=1}^n E[(X_i(M, Y_1^{i-1}))^2] \le P.
$$
Let $C_{FB}$ denote the capacity of the channel (\ref{gc}), which is often referred to as {\em Gaussian feedback capacity} in the literature.

\par It is well known that the {\em non-feedback capacity} of (\ref{gc}) can be obtained through the power spectral density (PSD) water-filling method~\cite{shannon49}. As a matter of fact, when the channel noise is white (i.e., $\{Z_i\}$ is i.i.d.), Shannon~\cite{shannon56} showed that feedback does not increase capacity, which means, like its non-feedback counterpart, the feedback capacity features an explicit and simple formula (Here we note that in~\cite{kadota71a},~\cite{kadota71b}, Kadota, Zakai and Ziv also proved this statement for continuous-time white Gaussian channels). On the other hand though, if the channel is not white, feedback may increase capacity (see~\cite{ozarow90a},~\cite{ozarow90b}), and little has been known about its feedback capacity despite a number of papers~\cite{dembo89},~\cite{pinsker69},~\cite{ebert70},~\cite{cover89} relating the two capacities. Computing $C_{FB}$ has been a long-standing open problem that is of fundamental importance in information theory.

\par An prominent approach to tackle Gaussian feedback capacity can be found in a pioneering work~\cite{cover89}, where Cover and Pombra characterized the capacity through the sequence of the so-called ``$n$-block feedback capacity'':
\begin{equation}  \label{n-block-0}
C_{FB, n}=\max_{\mbox{tr}(K_{X,n})\le nP}\frac{1}{2n}\log\frac{\det(K_{Y,n})}{\det(K_{Z,n})},
\end{equation}
where $K_{X,n}$, $K_{Y,n}$, $K_{Z,n}$ stand for the covariance matrices of $X^n$, $Y^n$ and $Z^n$, respectively. It is also shown that the maximization can be taken over $X^n$ of the special form $X^n=B_nZ^n+V^n$, where $B_n$ is a strictly lower-triangular $n\times n$ matrix and the Gaussian vector $V^n$ is independent of $Z^n$. So, (\ref{n-block-0}) can be rewriten as
\begin{equation}  \label{n-block}
C_{FB, n}=\max_{B_n,K_{V,n}}\frac{1}{2n}\log\frac{\det((B_n+I)K_{Z,n}(B_n+I)^T+K_{V,n})}{\det(K_{Z,n})},
\end{equation}
subject to the constraint
$$
\mbox{tr}(B_nK_{Z,n}B_n^T+K_{V,n})\le nP,
$$
where $K_{V,n}$ is a negative semi-definite $n\times n$ matrix. Then, using the asymptotic equipartition property for arbitrary (non-stationary non-ergodic) Gaussian processes, a coding theorem can then be proved to characterize the Gaussian feedback capacity as the limiting expression below:
\begin{equation}\label{limitnblock}
  C_{FB}=\lim_{n \to \infty} C_{FB, n}.
\end{equation}
Though considerable efforts have been devoted to follow up the Cover-Pombra formulation, a ``computable'' formula for the Gaussian feedback capacity does not seem to be within sight: it is already difficult to find the sequence of the optimal $\{B_n,K_{V,n}\}$ acheiving $\{C_{FB, n}\}$, and its limiting behavior seems to be as evasive.

\par Another prominent approach came along in a recent work of Kim~\cite{kim10}, which led to a number of breakthroughs deepening our understanding of Gaussian feedback capacity. Roughly speaking, instead of examining the channel (\ref{gc}) over a finite time window, Kim justifies certain interchanges between limits and integrals when evaluating (\ref{n-block}) and (\ref{limitnblock}) and recast the problem of computing $C_{FB}$ as an infinite-dimensional optimization problem. Below, we state one of the theorems in~\cite{kim10} that is relevant to our results.

\begin{thm}[Theorem $4.1$ of~\cite{kim10}] \label{theorem1}
Suppose that the power spectral density $S_Z(e^{i\theta})$ of the Gaussian noise process $\{Z_i\}_{i=1}^\infty$ is bounded away from 0, and has a canonical spectral factorization $S_Z(e^{i\theta})=|H_Z(e^{i\theta})|^2$, where $H_Z(e^{i\theta}) \in \mathcal{H}_2$. Then the feedback capacity $C_{FB}$ is given by
\begin{equation} \label{starting-point1}
C_{FB}=\max_{B}\frac{1}{2}\int_{-\pi}^{\pi}\log|1+B(e^{i\theta})|^2 S_Z(e^{i\theta}) \frac{d\theta}{2\pi},
\end{equation}
where the maximum is taken over all strictly causal $B(e^{i\theta})$ satisfying the power constraint
$$\int_{-\pi}^{\pi}|B(e^{i\theta})|^2S_Z(e^{i\theta})\frac{d\theta}{2\pi}\le P.$$
Furthermore, a filter $B^{\star}(e^{i\theta})$ attains the maximum in (\ref{starting-point1}) if and only if
\begin{enumerate}
  \item[i)] Power:
  $$
  \int_{-\pi}^{\pi}|B^{\star}(e^{i\theta})|^2S_Z(e^{i\theta})\frac{d\theta}{2\pi}= P;       $$
  \item[ii)] Output spectrum:
  $$
  \eta:=\mathop{essinf}\limits_{\theta\in[-\pi,\pi)}|1+B^{\star}(e^{i\theta})|^2S_Z(e^{i\theta})>0;
  $$
  \item[iii)] Strong orthogonality: For some $0<\lambda\le\eta$
  \begin{equation} \label{right-version}
  \frac{\lambda}{1+B^{\star}(e^{i \theta})}-B^{\star}(e^{-i \theta})S_Z(e^{i \theta})
  \end{equation}
  is causal.
\end{enumerate}
\end{thm}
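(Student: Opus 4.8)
The plan is to prove the two halves of the theorem in turn: the variational formula (\ref{starting-point1}), taking the Cover--Pombra limit (\ref{limitnblock}) and the $n$-block expression (\ref{n-block}) as given, and then the necessary and sufficient optimality conditions i)--iii). For the formula, the task is to identify $\lim_{n\to\infty}C_{FB,n}$ with the stationary optimization on the right-hand side of (\ref{starting-point1}). The workhorse is the asymptotic theory of Toeplitz matrices: under the standing hypotheses (the PSD bounded away from $0$ and a canonical factorization $S_Z=|H_Z|^2$ with $H_Z\in\mathcal{H}_2$), Szeg\H{o}'s limit theorem gives $\frac{1}{2n}\log\det K_{Z,n}\to\frac{1}{2}\int_{-\pi}^{\pi}\log S_Z\,\frac{d\theta}{2\pi}$ and likewise for an admissible output spectrum. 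For achievability I would take an arbitrary strictly causal $B$ meeting the power constraint, truncate its impulse response to produce feasible finite-block data $(B_n,K_{V,n})$, and show the resulting $n$-block values converge to $\frac{1}{2}\int\log|1+B|^2\,\frac{d\theta}{2\pi}$; for the converse I would bound $C_{FB,n}$ above by the stationary supremum. The genuinely delicate point here, and the one I expect to be the main technical obstacle, is justifying that the limit of the finite-block optima equals the stationary problem---an interchange of the limit in $n$ with optimization and integration---which is precisely what the two hypotheses on $S_Z$ are there to license.

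For the optimality conditions I would pass to the variational problem $\max_B J(B)$, with $J(B)=\frac{1}{2}\int_{-\pi}^{\pi}\log|1+B|^2\,\frac{d\theta}{2\pi}$, the integrand that survives once the constant $\frac{1}{2}\int\log S_Z\,\frac{d\theta}{2\pi}$ coming from the $\det(K_{Z,n})$ in (\ref{n-block}) is accounted for, subject to $P(B):=\int|B|^2S_Z\,\frac{d\theta}{2\pi}\le P$. Computing the first variation along a strictly causal perturbation $\delta B$ gives $\delta J=\int\mathrm{Re}\big[\frac{\delta B}{1+B}\big]\frac{d\theta}{2\pi}$ and $\delta P=2\int\mathrm{Re}[\overline{B}\,\delta B]\,S_Z\,\frac{d\theta}{2\pi}$, so stationarity of the Lagrangian $J-\mu P$ reads $\int\mathrm{Re}\big[\delta B\,\big(\frac{1}{1+B}-2\mu\,\overline{B}\,S_Z\big)\big]\frac{d\theta}{2\pi}=0$ for every strictly causal $\delta B$. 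Expanding $\delta B$ in its Fourier series and matching coefficients shows this is equivalent to the vanishing of all strictly positive-frequency Fourier coefficients of $\frac{1}{1+B}-2\mu\,\overline{B}\,S_Z$, i.e.\ to its causality; rescaling by the positive constant $\lambda=1/(2\mu)$ and using $\overline{B}=B(e^{-i\theta})$ turns this into exactly the strong-orthogonality condition (\ref{right-version}). Complementary slackness forces the power constraint to be active (condition i)) with $\mu>0$, while condition ii), $\eta>0$, is the regularity statement keeping $\log S_Y^\star$ and the variation well defined; establishing this regularity of the optimizer, together with the inequality $\lambda\le\eta$, is the subtle part of the necessity direction.

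The sufficiency of i)--iii) is where the non-concavity of $J$ (the map $B\mapsto\log|1+B|^2$ is harmonic, not concave) must be confronted, but here the argument closes cleanly. I would reduce to showing that $B^\star$ maximizes the \emph{unconstrained} Lagrangian $L(B)=J(B)-\mu(P(B)-P)$ over all strictly causal $B$; granting that, every feasible competitor $\tilde B$ obeys $J(\tilde B)=L(\tilde B)+\mu(P(\tilde B)-P)\le L(B^\star)=J(B^\star)$, using $\mu>0$ and $P(\tilde B)\le P=P(B^\star)$. To prove the unconstrained maximality, set $D=B-B^\star$ and apply the tangent-line bound from the concavity of $\log$ to the output spectra; the linear terms assemble into $\int\mathrm{Re}\big[D\,\big(\frac{1}{1+B^\star}-2\mu\,\overline{B^\star}\,S_Z\big)\big]\frac{d\theta}{2\pi}$, which vanishes because the bracket is causal by iii) while $D$ is strictly causal, leaving only the quadratic residual $\frac{1}{2}\int|D|^2\big(\frac{1}{|1+B^\star|^2}-2\mu S_Z\big)\frac{d\theta}{2\pi}$. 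This residual is nonpositive exactly when $S_Y^\star=|1+B^\star|^2S_Z\ge 1/(2\mu)=\lambda$ almost everywhere, which is guaranteed by $S_Y^\star\ge\eta\ge\lambda$---so the constraint $0<\lambda\le\eta$ in iii) is not cosmetic but is precisely what forces the decisive inequality to point the right way. Moreover the residual is strictly negative unless $D$ vanishes wherever $S_Y^\star>\lambda$, so the estimate is tight only when the optimal output spectrum is flat---the white-noise case---which foreshadows the uniqueness result developed later in the paper.
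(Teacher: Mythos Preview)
The paper does not prove this statement at all: Theorem~\ref{theorem1} is quoted verbatim as ``Theorem~4.1 of~\cite{kim10}'' and used as the starting point for the paper's own results, so there is no proof in the present paper to compare your proposal against. Your outline is a plausible reconstruction of Kim's original argument, not a match to anything proved here.

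That said, the paper does contain fragments that mirror your optimality-conditions argument, and the comparison is instructive. Your sufficiency step---bound $\log(S_Y/S_Y^\star)$ by $S_Y/S_Y^\star-1$, expand $|1+B|^2=|1+B^\star+D|^2$, kill the linear term via the causality of $\frac{1}{1+B^\star}-2\mu\,\overline{B^\star}S_Z$ against the strictly causal $D$, and control the quadratic residual by $\lambda\le\eta$---is exactly the mechanism the paper deploys in Lemma~\ref{less-than-one} (in the reformulated $C$-variable) and again in the sufficiency half of Appendix~\ref{proof-another-necessary-sufficient}. Your necessity argument via the first variation and complementary slackness also matches the Lagrangian/KKT derivation in Appendix~\ref{proof-another-necessary-sufficient}; note, however, that the paper obtains the spectral inequality $\lambda\le S_Y^\star$ not from a regularity argument for the optimizer as you suggest, but from a second-order (Hessian) analysis combined with the Szeg\H{o}--Grenander eigenvalue distribution for Toeplitz forms---this is the one place where your sketch and the paper's parallel proof genuinely diverge in method.

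Your first paragraph, on deriving the variational formula~(\ref{starting-point1}) itself from the Cover--Pombra limit, is outside the scope of this paper entirely; the interchange-of-limits issues you flag are indeed the heart of Kim's contribution in~\cite{kim10}, and the present paper simply assumes that work done.
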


Using Theorem~\ref{theorem1} and relevant tools from the theory of Hardy spaces, Kim further characterized the capacity achieving $B(e^{i \theta})$ for the special case that $\{Z_i\}$ is a $k$-th order autoregressive moving-average (ARMA($k$)) Gaussian process. Roughly speaking, the following theorem says that the optimal $B$ must be rational satisfying three conditions corresponding to those in Theorem~\ref{theorem1}.

\begin{thm}[Proposition $5.1$ of~\cite{kim10}] \label{theorem2}
Suppose the noise $\{Z_i\}$ is not white and is an ARMA($k$) Gaussian process with parameters $\alpha_i, \beta_i$, $|\alpha_i| <1$, $|\beta_i|<1$ for all $i=1, 2, \dots, k$, namely, it has the power spectral density
{\small \begin{equation} \label{armaknoise}
S_Z(e^{i\theta})=|H_Z(e^{i\theta})|^2=\left|\frac{P(e^{i\theta})}{Q(e^{i\theta})}\right|^2=\left|\frac{\prod_{i=1}^{k}(1+\alpha_i e^{i\theta})}{\prod_{i=1}^{k}(1+\beta_i e^{i\theta})}\right|^2.
\end{equation}}
Then the feedback capacity $C_{FB}$ in (\ref{starting-point1}) is necessarily achieved by a filter $B$ of the form
\begin{equation}\label{filterb}
B(e^{i \theta})=b(e^{i \theta})\frac{R(e^{i \theta})}{P(e^{i \theta})}-1,
\end{equation}
where $R(z)$ is a stable polynomial whose degree is at most $k$, and
$$
b(z)=\frac{A(z)}{A^\#(z)}=\frac{\prod_{n}(1-\gamma_n^{-1}z)}{\prod_{n}(1-\gamma_n z)}
$$
is a normalized Blaschke product of at most $k$ zeros. Furthermore, a filter $B^{\star}(e^{i \theta})$ of the form (\ref{filterb}) is optimal if and only if the following hold:
  \begin{enumerate}
    \item[i)] Power:
    $$
    \int_{-\pi}^{\pi}|B^{\star}(e^{i\theta})|^2S_Z(e^{i\theta})\frac{d\theta}{2\pi}= P;
    $$
    \item[ii)] Output spectrum: For all zeros $\gamma_n$ of b(z)
    $$
    0<S_{Y}^{\star}(\gamma_n)=\lambda \le \min_{\theta\in[-\pi,\pi)}S_{Y}^{\star}(e^{i\theta});
    $$
    \item[iii)] Factorization:
    $$
    P(z)A^\#(z)-R(z)A(z)
    $$
    has a factor $Q(z)$.
  \end{enumerate}
\end{thm}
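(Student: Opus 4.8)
The plan is to convert the three abstract optimality conditions of Theorem~\ref{theorem1} into the three concrete ones stated here, by exploiting the rationality of $S_Z$ in~(\ref{armaknoise}). The first, purely bookkeeping, step is to record the rational objects attached to a filter of the form~(\ref{filterb}). For such a $B$ one has $1+B(e^{i\theta})=b(e^{i\theta})R(e^{i\theta})/P(e^{i\theta})=A(e^{i\theta})R(e^{i\theta})/(A^\#(e^{i\theta})P(e^{i\theta}))$, which is analytic inside the unit disk (its poles, the zeros of $A^\#$ and $P$, lie outside) and, since $R$ is stable and $b$ has its zeros $\gamma_n$ inside, vanishes inside exactly at the $\gamma_n$. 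Consequently the output spectrum
\[
S_Y^\star(e^{i\theta})=|1+B^\star(e^{i\theta})|^2 S_Z(e^{i\theta})=|b(e^{i\theta})|^2\left|\frac{R(e^{i\theta})}{Q(e^{i\theta})}\right|^2=\left|\frac{R(e^{i\theta})}{Q(e^{i\theta})}\right|^2
\]
is rational (the last equality using $|b(e^{i\theta})|\equiv1$ for the normalized Blaschke product $b$), and the strong-orthogonality integrand~(\ref{right-version}) becomes an explicit rational function of $e^{i\theta}$. Having everything rational is what will let me read off the causality required in~(\ref{right-version}) as the vanishing of residues at the poles inside the disk.

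Before the characterization I must first establish that an optimal filter is \emph{necessarily} of the form~(\ref{filterb}); this is the crux. Here I would view the orthogonality condition of Theorem~\ref{theorem1} as a Wiener--Hopf (analytic-projection) identity with rational data: writing $\Phi$ for the causal function in~(\ref{right-version}) and multiplying through by $1+B^\star$ gives $S_Y^\star=\lambda+(1+B^\star)(S_Z-\Phi)$, and combining this with its para-conjugate exhibits $S_Y^\star-\lambda$ as a causal factor times a function whose anticausal part is determined by $S_Z$. Feeding in that $S_Z=|P/Q|^2$ is rational and that $S_Y^\star$ is bounded away from $0$ (condition ii) of Theorem~\ref{theorem1}), I would deduce that $S_Y^\star$ is itself rational with denominator $|Q|^2$, hence $S_Y^\star=|R/Q|^2$ for a stable polynomial $R$, and that the inner factor of the causal function $1+B^\star$ is a finite Blaschke product $b$. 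The degree bounds --- at most $k$ Blaschke factors and $\deg R\le k$ --- then come from matching the zeros and poles of these rational functions against $P$ and $Q$, i.e.\ from the fact that $S_Z$ has McMillan degree $k$. \textbf{This passage from the abstract optimum to the explicit rational parametrization is the main obstacle}; the remaining steps are essentially translations.

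With the form in hand, the characterization follows by imposing that~(\ref{right-version}) have no pole inside the unit disk. Substituting~(\ref{filterb}) makes~(\ref{right-version}) equal to $\lambda A^\#(e^{i\theta})P(e^{i\theta})/(A(e^{i\theta})R(e^{i\theta}))-B^\star(e^{-i\theta})S_Z(e^{i\theta})$, and a short computation shows its interior poles sit only at the zeros $\gamma_n$ of $A$ and at the interior zeros $-\beta_i$ of $Q(e^{-i\theta})$. The first term has poles at the $\gamma_n$ (and nowhere else inside), while $B^\star(e^{-i\theta})S_Z$ has poles at the $\gamma_n$ and at the $-\beta_i$. Cancellation of the two sets of $\gamma_n$-poles is a complementary-slackness relation: equating residues forces the (analytically continued) output spectrum to meet its lower bound there, i.e.\ $S_Y^\star(\gamma_n)=\lambda$, which together with $\lambda\le\eta=\min_\theta S_Y^\star(e^{i\theta})$ (the inequality $\lambda\le\eta$ of Theorem~\ref{theorem1}) is exactly condition ii). The $-\beta_i$-poles have no counterpart in the first term, so their residues must vanish on their own; clearing denominators, this happens precisely when $Q(e^{-i\theta})$ divides the numerator, which after reflecting back reads $Q(z)\mid P(z)A^\#(z)-R(z)A(z)$ --- condition iii). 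The power condition i) is carried over verbatim.

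Finally, for the converse directions I would run the same rational identities backwards: assuming i)--iii) for a filter of the form~(\ref{filterb}), condition iii) removes the $-\beta_i$-poles and condition ii) removes the $\gamma_n$-poles of~(\ref{right-version}), so that~(\ref{right-version}) is causal and, with i), all three hypotheses of Theorem~\ref{theorem1} hold, giving optimality. Apart from the necessity argument singled out above, the only delicate points I anticipate are the reflection-polynomial bookkeeping that turns ``no interior poles'' into the divisibility by $Q(z)$, and verifying that the residue matching at the $\gamma_n$ is equivalent to the clean statement $S_Y^\star(\gamma_n)=\lambda$ rather than merely to a linear relation among residues.
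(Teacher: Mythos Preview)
The paper does not give its own proof of this statement --- it is quoted as Proposition~5.1 of~\cite{kim10}. The closest the paper comes is the proof of Theorem~\ref{theorem3}, which reproves the necessity of the rational form (in the $C$-variable rather than the $B$-variable) and then derives a different but equivalent set of optimality conditions. So the relevant comparison is between your sketch and that argument.

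Your plan and the paper's argument diverge precisely at the step you flag as the main obstacle. You try to extract the rationality of $S_Y^\star$ directly from the strong-orthogonality identity, by writing $S_Y^\star-\lambda=(1+B^\star)(S_Z-\Phi)$ and pairing it with its para-conjugate. The paper instead first proves an auxiliary variational fact (Lemma~\ref{causal}): for any optimizer, $\overline{C^\star}(C^\star+H_Z)$ is causal. This is shown by a perturbation argument --- if it failed, a multiplicative perturbation $(1+xe^{in\theta})$ of $C^\star+H_Z$ would preserve the objective (by Jensen) while strictly lowering the power, contradicting condition~i). With this lemma in hand, one expands $|Q|^2 S_Y^\star$ and reads off directly that it is a Laurent polynomial in $e^{i\theta}$ of degree at most $k$, whence $S_Y^\star=|R/Q|^2$ for a stable $R$ of degree $\le k$. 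The inner--outer decomposition then shows $(C^\star+H_Z)Q/(\sigma R)$ is a Blaschke product. Your Wiener--Hopf route may well be made to work, but the paper's detour through Lemma~\ref{causal} is what makes the rationality and the degree bound on $R$ immediate; your sketch does not indicate how either would emerge from the para-conjugate pairing alone.

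The second point where the paper's argument is sharper is the bound on the number of Blaschke factors. You say the degree bound ``comes from matching zeros and poles against $P$ and $Q$'', but this is not quite how the paper (or Kim) does it. The paper observes that the causality of~(\ref{jian-causal}) forces $1-\lambda S_Y^\star(z)$ to carry the factor $\prod_i(1-x_i^{-1}z)$, and by symmetry also $\prod_i(1-x_i^{-1}z^{-1})$; since $1-\lambda S_Y^\star$ is rational of degree at most $2k$, there can be at most $k$ such $x_i$. This zero-counting step is the actual source of the ``at most $k$ zeros'' claim, and it does not fall out of a generic McMillan-degree count.

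Your treatment of the if-and-only-if characterization --- reading conditions ii) and iii) as the vanishing of interior poles of~(\ref{right-version}) at the $\gamma_n$ and at the zeros of $Q$ --- is correct in spirit and matches how Kim's argument proceeds; the paper's Theorem~\ref{theorem3} packages the same information differently (via Cauchy's integral formula and explicit residue expansions) but the underlying mechanism is the same.
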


When applied to the case $k=1$, Theorem~\ref{theorem2} readily yields a rather tractable expression for the capacity achieving $B$ and gives a simple and explicit formula for $C_{FB}$, as detailed in the following theorem.
\begin{thm}[Theorem $5.3$ in~\cite{kim10}] \label{arma1-theorem}
Suppose the noise process $\{Z_i\}$ is an ARMA($1$) Gaussian process with parameters $\alpha$ and $\beta$, $|\alpha| < 1$, $|\beta| < 1$. Then, the Gaussian feedback capacity is given by
\begin{equation} \label{arma1-xx}
C_{FB}=-\frac{1}{2}\log x^2,
\end{equation}
where $x$ is the unique root of the following fourth-order polynomial
\begin{equation} \label{arma1-equation}
Px^2=\frac{(1-x^2)(1+\alpha x)^2}{(1+\beta x)^2},
\end{equation}
satisfying
\begin{equation}\label{conditionofx}
x \in \begin{cases}
(-1, 0) & \mbox{ if } \alpha \geq \beta,\\
(0, 1)  & \mbox{ if } \alpha < \beta.
\end{cases}
\end{equation}
\end{thm}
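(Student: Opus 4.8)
The plan is to specialize Theorem~\ref{theorem2} to $k=1$ and then turn its three optimality conditions into scalar equations whose elimination produces exactly (\ref{arma1-equation}). For $k=1$ we have $P(z)=1+\alpha z$ and $Q(z)=1+\beta z$, and (\ref{filterb}) says the optimal filter is $B^\star(z)=b(z)R(z)/P(z)-1$ with $b$ a single normalized Blaschke factor $(1-\gamma^{-1}z)/(1-\gamma z)$ and $R$ affine. Since $B^\star$ is strictly causal, $1+B^\star$ is causal with $(1+B^\star)(0)=1$; because $b(0)=P(0)=1$ this normalizes $R(0)=1$, so I may write $R(z)=1-z/x$, where $x$ is the root of $R$. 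On the unit circle $|b|\equiv 1$ and the $|P|^2$ factors cancel against $S_Z=|P/Q|^2$, giving the optimal output spectrum $S_Y^\star(e^{i\theta})=|1+B^\star|^2 S_Z=|R(e^{i\theta})|^2/|Q(e^{i\theta})|^2$.

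I would first dispose of the capacity value. By (\ref{starting-point1}), $C_{FB}=\tfrac12\int_{-\pi}^{\pi}\log S_Y^\star\,\tfrac{d\theta}{2\pi}$, and since $|\beta|<1$ the mean-value property of $\log|\cdot|$ gives $\int_{-\pi}^{\pi}\log|Q(e^{i\theta})|^2\,\tfrac{d\theta}{2\pi}=0$, so $C_{FB}=\tfrac12\int_{-\pi}^{\pi}\log|R(e^{i\theta})|^2\,\tfrac{d\theta}{2\pi}$. Applying the same fact to $R(z)=1-z/x$, whose only root $x$ lies in the open unit disk, evaluates this to $-\tfrac12\log x^2$, which is precisely (\ref{arma1-xx}). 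Hence the entire theorem reduces to showing that this $x$ is the root of (\ref{arma1-equation}) lying in the interval prescribed by (\ref{conditionofx}).

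Next I would read off equations from conditions i)--iii) in the unknowns $x$, the Blaschke parameter $\gamma$, and the multiplier $\lambda$. The factorization condition iii) demands that $P(z)A^\#(z)-R(z)A(z)$, with $A(z)=1-\gamma^{-1}z$ and $A^\#(z)=1-\gamma z$, be divisible by $Q(z)=1+\beta z$; this is a single scalar relation, in fact a quadratic tying $\gamma$ to $x$. The output-spectrum condition ii) supplies the decisive simplification: differentiating $S_Y^\star(e^{i\theta})$ with respect to $\cos\theta$, the $\theta$-dependent terms cancel, so $S_Y^\star$ is monotone in $\cos\theta$ and its circle-minimum is attained at $\theta=0$ or $\theta=\pi$; imposing $S_Y^\star(\gamma)=\lambda\le\min_\theta S_Y^\star(e^{i\theta})$ then fixes $\lambda$ and, through which endpoint realizes the minimum, selects the admissible branch and the sign of $x$. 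Finally the power condition i), rewritten via $B^\star=(1+B^\star)-1$ as an explicit contour integral in $b,R,Q$, furnishes the remaining magnitude equation.

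The principal obstacle is the elimination of $\gamma$ and $\lambda$ together with the branch and uniqueness analysis. Since $\gamma$ enters the factorization relation quadratically and reappears inside the power integral, carrying it through and verifying that the reduced single equation is exactly $Px^2(1+\beta x)^2=(1-x^2)(1+\alpha x)^2$ is the heaviest step, and I expect it to rest on a cancellation specific to $b$ being a genuine (unimodular) Blaschke factor. For the sign and uniqueness I would read (\ref{arma1-equation}) as $P=P(x):=(1-x^2)(1+\alpha x)^2/\big(x^2(1+\beta x)^2\big)$ and show by a derivative computation that $P(x)$ is strictly monotone on each of $(-1,0)$ and $(0,1)$, sweeping $(0,\infty)$ once; thus (\ref{arma1-equation}) has at most one root in each interval, and conditions ii) together with feasibility ($|\gamma|>1$, $\lambda>0$, $R$'s root inside the disk) single out the interval of (\ref{conditionofx}) as the one carrying the true optimizer. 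The switch at $x=0$ is naturally governed by $\mathrm{sign}(\alpha-\beta)$, the clean confirmation being the degenerate case $\alpha=\beta$: the noise is then white, (\ref{arma1-equation}) collapses to $Px^2=1-x^2$, and the formula returns the Shannon value $\tfrac12\log(1+P)$ with $x=0$ separating the two regimes.
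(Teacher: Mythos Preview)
Your plan has a genuine gap at the very first step, and it propagates through the rest of the argument. The ``normalized Blaschke product'' $b(z)=(1-\gamma^{-1}z)/(1-\gamma z)$ in Theorem~\ref{theorem2} is normalized so that $b(0)=1$, \emph{not} so that $|b|\equiv 1$ on the circle. A direct computation gives $|b(e^{i\theta})|^2=\gamma^{-2}$ for real $\gamma$, so the output spectrum is
\[
S_Y^\star(e^{i\theta})=|1+B^\star|^2 S_Z=\gamma^{-2}\,\frac{|R(e^{i\theta})|^2}{|Q(e^{i\theta})|^2},
\]
not $|R|^2/|Q|^2$. Since $R$ is \emph{stable} (its root lies outside the closed disk) with $R(0)=1$, Jensen's formula gives $\int\log|R|^2\,d\theta/2\pi=0$, and the entire capacity comes from the Blaschke factor: $C_{FB}=-\log|\gamma|$. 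In other words, the $x$ of Theorem~\ref{arma1-theorem} is the Blaschke zero $\gamma$, not the root of $R$. Your parametrization $R(z)=1-z/x$ with $|x|<1$ contradicts the stability of $R$, and your capacity computation works only by accident (you attributed to $R$ the contribution that actually comes from $b$). Once you correct this, the ``elimination of $\gamma$'' you flagged as the principal obstacle disappears, because $\gamma$ \emph{is} $x$; what remains is to eliminate the root of $R$ and $\lambda$.

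For comparison, the paper does not prove Theorem~\ref{arma1-theorem} from Theorem~\ref{theorem2} at all; it recovers it in Example~\ref{arma1-example} as the $k=1$ specialization of Theorem~\ref{theorem3}, which is phrased in terms of $C(z)=B(z)H_Z(z)$. In that formulation the optimal $C^\star$ takes the form $y_{11}z/(1-x_1 z)$, and Conditions i) and ii) of Theorem~\ref{theorem3} immediately give the two-equation system
\[
\frac{y_{11}^2}{1-x_1^2}=P,\qquad \frac{y_{11}x_1}{1-x_1^2}+\frac{1+\alpha x_1}{1+\beta x_1}=0,
\]
from which eliminating $y_{11}$ yields (\ref{arma1-equation}) in one line; the interval condition (\ref{conditionofx}) is then read off from Condition iv). The change of variables $B\mapsto C$ absorbs the Blaschke normalization and the factor $|b|^2=\gamma^{-2}$, which is exactly what tripped you up in the $B$-formulation.
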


We now digress a bit to briefly mention related results on the ARMA($1$) Gaussian feedback capacity in the literature: Generalizing the celebrated Schalkwijk-Kailath scheme~\cite{sk66},~\cite{schalkwijk66}, Butman~\cite{butman69} obtained a lower bound of the feedback capacity of AR($1$) channel (a special ARMA($1$) channel with $\alpha=0$). Butman's bound was shown to be optimal under some cases of linear feedback schemes by Wolfowitz~\cite{wolfowitz75} and Tiernan~\cite{tiernan76}. Tiernan and Schalkwijk~\cite{ts74} also found an upper bound of AR($1$) Gaussian channel capacity, which is equal to Butman's lower bound at very low and very high signal-to-noise ratio. It was shown~\cite{kim06} that Butman's lower bound is indeed the capacity, and the capacity of MA($1$) channel (a special ARMA($1$) channel with $\beta=0$) was also derived in the same paper. More recently, Yang, Kav\v{c}i\'{c} and Tatikonda~\cite{yang07} studied the ARMA($k$) Gaussian channel by analyzing the structure of the optimal input distribution and reformulating the problem as a stochastic control optimization problem. And based on a speculation of the limiting behavior of the optimal input distribution, they derived the formula (\ref{arma1-xx}) and conjectured that it gives the ARMA($1$) Gaussian feedback capacity.

As mentioned above, the power of the variational formulation as in Theorem~\ref{theorem1} has been showcased in Theorem~\ref{arma1-theorem}, where the conjecture of~\cite{yang07} has been confirmed and the ARMA($1$) Gaussian feedback capacity is given as an explicit and simple formula. To the best of our knowledge, the ARMA($1$) Gaussian feedback channel is the only non-trivial scenario whose Gaussian feedback capacity is ``explicit''. The success by the variational formulation approach, contrasted by all the above-mentioned other approaches that have been struggling dealing with special cases of an ARMA($1$) channel, naturally posed the question of whether it can be extended to deal with more general channels, for instance, ARMA($k$) Gaussian feedback channels. Attempts in this direction, however, have somehow encountered certain technical barriers, due to the fact that the form in (\ref{filterb}) is ``less manageable'' (see Page $78$ in~\cite{kim10}). As a matter of fact, instead of following the variational formulation framework, an alternative state-space representation approach has been proposed in~\cite{kim10} to deal with the ARMA($k$) Gaussian feedback capacity, only to yield an intractable optimization problem  (see Theorem $6.1$ in~\cite{kim10}). Here we remark that prior to~\cite{kim10}, a result of similar nature has also been derived in Theorem $6$ of~\cite{yang07}, which however appears to be equally intractable.

In this paper, we will position ourselves within Kim's framework~\cite{kim10} and further examine feedback capacity of a stationary Gaussian channel as in (\ref{gc}). Our starting point is precisely Theorem~\ref{theorem1}, but instead of considering the filter $B(e^{i \theta})$, we use the method of ``change of variables'' and consider
\begin{equation} \label{c-def}
C(e^{i \theta}) \triangleq B(e^{i \theta})H_Z(e^{i \theta});
\end{equation}
here we note that since $B(e^{i \theta})$ is strictly causal and $H_Z(e^{i\theta}) \in \mathcal{H}_2$, it is obvious that $C(e^{i \theta})$ is also strictly causal, and thereby can be written as $C(e^{i\theta})=\sum_{k=1}^{\infty}c_ke^{ik\theta}$ for some $c_1, c_2, \dots \in \mathbb{R}$. Apparently, (\ref{c-def}) can be used to reformulate other quantities, such as the PSD of the channel output
\begin{equation} \label{output-PSD}
S_Y(e^{i \theta})=|C(e^{i \theta})+H(e^{i \theta})|^2,
\end{equation}
and eventually reformulate Theorem~\ref{theorem1} as follows:
\begin{thm}[Theorem $4.1$ of~\cite{kim10} reformulated] \label{theorem1-reformulated}
Suppose that the power spectral density $S_Z(e^{i\theta})$ of the Gaussian noise process $\{Z_i\}_{i=1}^\infty$ is bounded away from 0, and has a canonical spectral factorization $S_Z(e^{i\theta})=|H_Z(e^{i\theta})|^2$, where $H_Z(e^{i\theta})\in \mathcal{H}_2$. Then the feedback capacity $C_{FB}$ is given by
\begin{equation} \label{starting-point2}
C_{FB}=\max_{C}\frac{1}{2}\int_{-\pi}^{\pi}\log|C(e^{i\theta})+H(e^{i \theta})|^2 \frac{d\theta}{2\pi},
\end{equation}
where the maximum is taken over all strictly causal $C(e^{i\theta})$ satisfying the power constraint
\begin{equation}  \label{c-power}
\int_{-\pi}^{\pi}|C(e^{i\theta})|^2\frac{d\theta}{2\pi}\le P.
\end{equation}
Furthermore, a $C^{\star}(e^{i\theta})$ attains the maximum in (\ref{starting-point2}) if and only if
\begin{enumerate}
  \item[i)] Power:
  \begin{equation} \label{c-star-power}
  \int_{-\pi}^{\pi}|C^{\star}(e^{i\theta})|^2 \frac{d\theta}{2\pi}= P;
  \end{equation}
  \item[ii)] Output spectrum:
  \begin{equation} \label{c-star-output-spectrum}
  \eta:=\mathop{essinf}\limits_{\theta\in[-\pi,\pi)}|C^{\star}(e^{i\theta})+H(e^{i \theta})|^2 > 0;
  \end{equation}
  \item[iii)] Strong orthogonality: For some $0<\lambda\le\eta$
  \begin{equation} \label{c-star-strong-orthogonality}
  \frac{\lambda}{C^{\star}(e^{i \theta})+H(e^{i \theta})}-C^{\star}(e^{-i \theta})
  \end{equation}
  is causal.
\end{enumerate}
\end{thm}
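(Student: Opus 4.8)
The plan is to prove this entirely as a change-of-variables restatement of Theorem~\ref{theorem1}: I would substitute $C(e^{i\theta}) = B(e^{i\theta})H(e^{i\theta})$ (writing $H$ for the factor $H_Z$) into each ingredient of Theorem~\ref{theorem1} and check that it becomes the matching ingredient of Theorem~\ref{theorem1-reformulated}. The first thing to pin down is that this substitution is a bijection between the two feasible sets. Since $S_Z = |H|^2$ is bounded away from $0$ and $H$ is the canonical (outer) spectral factor, $H$ has no zeros in the disk and is bounded below in modulus, so $1/H \in \mathcal{H}_\infty$ and is in particular causal. Consequently $B \mapsto C = BH$ sends strictly causal $B$ to strictly causal $C$ (as already observed after (\ref{c-def})), while $C \mapsto B = C/H$ sends strictly causal $C$ back to strictly causal $B$, because the product of a series supported on indices $\ge 1$ with a causal series is again supported on indices $\ge 1$. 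Thus feasible $B$'s and feasible $C$'s are in one-to-one correspondence, so the two optimizations share the same value.

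Next I would dispatch the objective, the power constraint, and conditions (i)--(ii), all of which follow from two one-line identities. From $(1+B)H = H + BH = C + H$ we get $|1+B(e^{i\theta})|^2 S_Z(e^{i\theta}) = |C(e^{i\theta}) + H(e^{i\theta})|^2$, which turns the integrand of (\ref{starting-point1}) into that of (\ref{starting-point2}) and turns condition (ii) into (\ref{c-star-output-spectrum}). From $|B|^2 S_Z = |BH|^2 = |C|^2$ the power functional $\int |B|^2 S_Z\,\tfrac{d\theta}{2\pi}$ becomes $\int |C|^2\,\tfrac{d\theta}{2\pi}$, giving both (\ref{c-power}) and, for the optimizer, (\ref{c-star-power}).

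The one genuinely nontrivial step is the strong-orthogonality condition (iii). Using $B^\star = C^\star/H$ I would rewrite the two terms of (\ref{right-version}): first $1 + B^\star = (C^\star + H)/H$, so $\lambda/(1+B^\star) = \lambda H/(C^\star + H)$; second, since the canonical factor has real coefficients, $\overline{H(e^{i\theta})} = H(e^{-i\theta})$ and hence $S_Z(e^{i\theta}) = H(e^{i\theta})H(e^{-i\theta})$, which gives $B^\star(e^{-i\theta})S_Z(e^{i\theta}) = C^\star(e^{-i\theta})H(e^{i\theta})$. Subtracting, the expression in (\ref{right-version}) factors as
\[
\frac{\lambda}{1+B^\star(e^{i\theta})} - B^\star(e^{-i\theta})S_Z(e^{i\theta}) = H(e^{i\theta})\left(\frac{\lambda}{C^\star(e^{i\theta}) + H(e^{i\theta})} - C^\star(e^{-i\theta})\right),
\]
i.e. it is exactly $H(e^{i\theta})$ times the expression in (\ref{c-star-strong-orthogonality}).

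I expect the main obstacle to be arguing that causality passes through this factor of $H$ in both directions, and this is precisely where the outer/boundedness hypotheses are indispensable. In one direction, $H$ is causal, so if the bracketed expression is causal then so is its product with $H$, recovering (\ref{right-version}); in the other, $1/H \in \mathcal{H}_\infty$ is causal, so multiplying the causal function (\ref{right-version}) by $1/H$ stays causal, recovering the bracket. Hence (\ref{right-version}) is causal if and only if (\ref{c-star-strong-orthogonality}) is causal, with the same $\lambda$; the constraint $0<\lambda\le\eta$ is untouched because condition (ii) already identifies the two values of $\eta$. Collecting the bijection of feasible sets with the term-by-term matching of the objective and of (i)--(iii), I would conclude that $B^\star$ is optimal for (\ref{starting-point1}) exactly when $C^\star = B^\star H$ is optimal for (\ref{starting-point2}), which is the asserted reformulation.
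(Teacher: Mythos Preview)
Your proposal is correct and is precisely the argument the paper has in mind: the paper presents Theorem~\ref{theorem1-reformulated} simply as the change-of-variables $C=BH_Z$ applied to Theorem~\ref{theorem1}, without writing out a separate proof. Your term-by-term verification (bijection of feasible sets via the outer factor, the identities $|1+B|^2S_Z=|C+H|^2$ and $|B|^2S_Z=|C|^2$, and the factorization of the strong-orthogonality expression as $H$ times the new one) is exactly the implicit computation behind that reformulation.
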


The remainder of the paper is organized as follows. In Section~\ref{armakgfcmp}, we review relevant results from complex analysis and the theory of Hardy spaces as mathematical preliminaries that will be used in our proofs. Section~\ref{main-results} contains the main results of this paper, which can roughly summarized below:
\begin{itemize}
\item We prove in Section~\ref{unique-solution-section} that unless the noise $\{Z_n\}$ is white, the optimal solution to the optimization problem (\ref{starting-point2}) is unique; see Theorem~\ref{unique-output-input}.
\item In Section~\ref{recursive-procedure}, we propose an algorithm to recursively compute the optimal solution, which is guaranteed to converge to the unique optimal solution in theory and features an efficient implementation for a suboptimal solution in practice; see Algorithm~\ref{sequential-algorithm}.
\item In Section~\ref{armak-section}, we will establish Theorem~\ref{theorem3}, a ``more manageable'' version of Theorem~\ref{theorem2} and a natural extension to Theorem~\ref{arma1-theorem} combined, and derive a relatively more explicit formula for the ARMA($k$) Gaussian feedback capacity as a simple function evaluated at a solution to a system of equations, which is amenable to numerical computation for the cases $k=1, 2$ and possibly beyond.
\end{itemize}
Several examples are given in Section~\ref{examples-section}. More specifically, Example~\ref{arma1-example} details the fact that Theorem~\ref{theorem3} naturally extends Theorem~\ref{arma1-theorem}, and Example~\ref{arma2-example} use Theorem~\ref{theorem3} to numerically compute the feedback capacity of ARMA($k$) Gaussian channels. Focusing on the application of Algorithm~\ref{sequential-algorithm} to ARMA($k$) Gaussian channels, we discuss its efficient implementation and numerically compute lower bounds on the feedback capacity of ARMA($3$) Gaussian channels.

\section{Mathematical Preliminaries} \label{armakgfcmp}

In this section, we review a number of important theorems in complex analysis and the theory of Hardy spaces, which will be used in our proofs and may not be stated in the most general form.

Let $\mathbb{D}$ denote the open unit disk on the complex plane $\mathbb{C}$, that is,
$$
\mathbb{D}=\{z \in \mathbb{C}: |z| < 1\},
$$
and let $\partial \mathbb{D}$ and $\overline{\mathbb{D}}$ denote its boundary and closure, respectively, that is,
$$
\partial \mathbb{D}=\{z \in \mathbb{C}: |z| = 1\}, \quad \overline{\mathbb{D}}=\{z \in \mathbb{C}: |z| \leq 1\}.
$$

We first review two fundamental theorems in complex analysis, which are relatively better-known yet still included for self-containedness.

The following theorem gives the classical Cauchy's integral formula for an analytic function on $\overline{\mathbb{D}}$.
\begin{thm}[Cauchy's integral formula]
Let $U$ be an open subset of the complex plane $\mathbb{C}$ which contains $\overline{\mathbb{D}}$, and let $f: U \to \mathbb{C}$ be an analytic function. Then for any $n \geq 0$ and any $z_0 \in \mathbb{D}$, we have
$$
\oint_{\partial \mathbb{D}}\frac{f(z)}{(z-z_0)^{n+1}}\frac{dz}{2\pi i}=\frac{f^{(n)}(z_0)}{n!},
$$
where the contour integral is taken counter-clockwise, and the superscript $(n)$ denotes the $n$-th order complex derivative.
\end{thm}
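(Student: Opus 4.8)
The plan is to first establish the base case $n = 0$—the classical Cauchy integral formula—and then obtain the general formula by repeatedly differentiating under the integral sign with respect to the parameter $z_0$. For the base case, I would begin from Cauchy's integral theorem (the Cauchy--Goursat theorem), which asserts that the contour integral of a function analytic on a simply connected region vanishes around any closed contour. Fixing $z_0 \in \mathbb{D}$, the function $g(z) = f(z)/(z - z_0)$ is analytic on $U \setminus \{z_0\}$; by a standard contour-deformation argument (a keyhole contour, or equivalently applying Cauchy's theorem to the annular region between $\partial \mathbb{D}$ and a small circle $C_r = \{|z - z_0| = r\}$), the integral of $g$ over $\partial \mathbb{D}$ equals its integral over $C_r$ for every sufficiently small $r > 0$.

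Next I would evaluate the integral over the small circle explicitly. Parametrizing $C_r$ by $z = z_0 + r e^{it}$, $t \in [0, 2\pi)$, gives $dz = i r e^{it}\, dt$ and hence
$$\oint_{C_r} \frac{f(z)}{z - z_0}\frac{dz}{2\pi i} = \frac{1}{2\pi}\int_0^{2\pi} f(z_0 + r e^{it})\, dt.$$
Since $f$ is continuous at $z_0$, letting $r \to 0$ and invoking uniform continuity of $f$ on a closed disk around $z_0$ shows that the right-hand side converges to $f(z_0)$; as the left-hand side is independent of $r$, this establishes the $n = 0$ case.

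For the general case $n \geq 1$, I would differentiate the identity $f(z_0) = \oint_{\partial \mathbb{D}} f(z)(z - z_0)^{-1}\, \frac{dz}{2\pi i}$ repeatedly with respect to $z_0$. Since
$$\frac{\partial^n}{\partial z_0^n}\frac{1}{z - z_0} = \frac{n!}{(z - z_0)^{n+1}},$$
carrying the derivatives inside the integral immediately produces the claimed formula. The main technical obstacle—and the only step requiring real care—is justifying this interchange of differentiation and integration. The key observation is that if $z_0$ ranges over a compact subset $K \subset \mathbb{D}$, then $\inf_{z \in \partial \mathbb{D},\, z_0 \in K}|z - z_0| = \delta > 0$, so the integrand and all of its $z_0$-derivatives are bounded uniformly in $z$ along $\partial \mathbb{D}$ by quantities depending only on $\max_{\partial \mathbb{D}}|f|$ and $\delta$. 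This uniform bound, together with the compactness of $\partial \mathbb{D}$, licenses differentiation under the integral sign via the standard dominated-convergence criterion, applied inductively on $n$. An alternative that sidesteps the differentiation-under-the-integral theorem is to argue directly by induction, forming the difference quotient of the integral representation of $f^{(n-1)}(z_0)$ and estimating the resulting error term using the same uniform lower bound $\delta$ on $|z - z_0|$; I would expect either route to close the argument cleanly once the base case is in hand.
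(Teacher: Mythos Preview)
Your argument is a correct and standard proof of the generalized Cauchy integral formula: the base case via contour deformation to a small circle and continuity of $f$, followed by induction through differentiation under the integral sign justified by the uniform lower bound on $|z-z_0|$ for $z\in\partial\mathbb{D}$ and $z_0$ in a compact subset of $\mathbb{D}$. There is nothing to compare it against, however, since the paper states Cauchy's integral formula only as a classical preliminary in Section~\ref{armakgfcmp} and does not supply a proof of its own.
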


The Cauchy integral formula can be used to establish the following Jensen's formula.
\begin{thm}[Jensen's formula]
Let $U$ be an open subset of the complex plane $\mathbb{C}$ which contains $\overline{\mathbb{D}}$. Let $f: U \to \mathbb{C}$ be an analytic function, and let $z_1, z_2, \dots, z_n$ denote the zeros of $f$ in $\mathbb{D}$ repeated according to multiplicity. Suppose that $f(0) \neq 0$. Then, we have
$$
\log |f(0)|=\sum _{k=1}^{n}\log \left({|z_{k}|}\right)+{\frac {1}{2\pi }}\int _{0}^{2\pi }\log |f(e^{i\theta })|\,d\theta.
$$
\end{thm}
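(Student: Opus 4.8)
The plan is to reduce the statement to the zero-free case by dividing out the zeros with Blaschke factors, and then to invoke Cauchy's integral formula applied to a branch of the logarithm. For each zero $z_k \in \mathbb{D}$, let $b_k(z) = \frac{z - z_k}{1 - \overline{z_k}\, z}$ denote the associated Blaschke factor; it is analytic on a neighborhood of $\overline{\mathbb{D}}$, vanishes to first order exactly at $z_k$, and satisfies $|b_k(e^{i\theta})| = 1$ for all $\theta$. First I would set
$$
g(z) = f(z) \prod_{k=1}^n \frac{1}{b_k(z)} = f(z) \prod_{k=1}^n \frac{1 - \overline{z_k}\, z}{z - z_k},
$$
and check that the poles at the $z_k$ are cancelled by the corresponding zeros of $f$ (counted with multiplicity), so that $g$ extends to a function that is analytic and nowhere zero on $\overline{\mathbb{D}}$, hence on some slightly larger disk $\{|z| < 1 + \delta\} \subset U$ by compactness.

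Next I would pass to the logarithm. Since $g$ is zero-free on the simply connected disk $\{|z| < 1 + \delta\}$, there is an analytic branch of $\log g$ there. Applying Cauchy's integral formula with $n = 0$ and $z_0 = 0$ to $\log g$, and parametrizing $\partial \mathbb{D}$ by $z = e^{i\theta}$ so that $dz/z = i\, d\theta$, yields
$$
\log g(0) = \oint_{\partial \mathbb{D}} \frac{\log g(z)}{z} \frac{dz}{2\pi i} = \frac{1}{2\pi} \int_0^{2\pi} \log g(e^{i\theta})\, d\theta.
$$
Taking real parts gives the mean value identity $\log |g(0)| = \frac{1}{2\pi} \int_0^{2\pi} \log |g(e^{i\theta})|\, d\theta$. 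Because $|b_k| = 1$ on $\partial \mathbb{D}$, the integrand on the right equals $\log|f(e^{i\theta})|$; and evaluating the product at the origin gives $g(0) = f(0) \prod_{k=1}^n (-1/z_k)$, so that $\log|g(0)| = \log|f(0)| - \sum_{k=1}^n \log|z_k|$. Substituting both identities and rearranging produces exactly the claimed formula.

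The main obstacle is the possibility that $f$ vanishes somewhere on the boundary $\partial \mathbb{D}$, which would invalidate the clean mean value step: the branch of $\log g$ would fail to be analytic up to the boundary, and $\log|f(e^{i\theta})|$ would acquire logarithmic singularities. Since such boundary zeros are finite in number and contribute only integrable singularities, I expect to dispose of this case by a limiting argument, either applying the zero-free computation on the slightly smaller circle $|z| = r$ and letting $r \uparrow 1$ while controlling the integral by monotone or dominated convergence, or perturbing each boundary zero slightly inward and passing to the limit. Verifying the requisite uniform integrability is the only genuinely delicate point; everything else reduces to the bookkeeping of the Blaschke factors and a single application of Cauchy's integral formula.
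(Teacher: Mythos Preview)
The paper does not actually prove Jensen's formula; it is stated in the preliminaries section as a classical result, with only the remark that ``the Cauchy integral formula can be used to establish'' it. Your argument is the standard proof and is fully consistent with that remark: you divide out the zeros by Blaschke factors, apply Cauchy's integral formula to a branch of $\log g$, and handle possible boundary zeros by a limiting argument. There is nothing to compare against, and your proposal is correct.
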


Next, we will review some basic notions, terminology and needed results from the theory of Hardy spaces.

Let $1 \leq p < \infty$ and let $f(z)$ be an analytic function on $\mathbb{D}$. The function $f(z)$ is said to be of class $\mathcal{H}_p=\mathcal{H}_p(\mathbb{D})$ if
$$
\|f\|_{\mathcal{H}_p} \triangleq \sup_{0 < r < 1} \left(\int_{-\pi}^{\pi}|f(re^{i\theta})|^p\frac{d\theta}{2\pi}\right)^{1/p} < \infty.
$$
It is well known that by taking the pointwise radial limit, any $f(z) \in \mathcal{H}_p$ can be extended to a function $f(e^{i \theta}) \in \mathcal{L}_p=\mathcal{L}_p(\partial \mathbb{D})$, where
$$
\mathcal{L}_p(\partial \mathbb{D}) \triangleq \left\{f(e^{i \theta}): \left(\int_{-\pi}^{\pi}|f(e^{i\theta})|^p\frac{d\theta}{2\pi}\right)^{1/p} < \infty \right\}.
$$
When there is no risk of confusion, we will follow the usual convention and identify $f(z)$ and $f(e^{i \theta})$, which we may oftentimes simply denote by $f$. Then, $\mathcal{H}_p$ can be viewed as a closed vector subspace of $\mathcal{L}_p$.

For any $f \in \mathcal{H}_p$, we say that $f$ is \emph{causal} (or \emph{strictly causal}) if its Fourier coefficients $c_n$ is equal to $0$ for all $n < 0$ (or $n \leq 0$), where
$$
c_n=\int_{-\pi}^{\pi}f(e^{i\theta})e^{-in\theta}\frac{d\theta}{2\pi}=0,\quad n=0, \pm 1, \pm2, \dots.
$$
It is well known that $\mathcal{H}_p$ is precisely the subset of causal functions in $\mathcal{L}_p$. For a quick example, we note that $\mathcal{H}_2$, represented by infinite sequences indexed by $\mathbb{N} \cup \{0\}$ as
$$
\mathcal{H}_2 = \left\{\sum_{n=0}^{\infty} a_n e^{i n \theta}: \sum_{n=0}^{\infty} a_n^2 < \infty \right\},
$$
sits naturally inside the space $\mathcal{L}_2$, which can be represented by bi-infinite sequences indexed by $\mathbb{Z}$ as
$$
\mathcal{L}_2 = \left\{\sum_{n=-\infty}^{\infty} a_n e^{i n \theta}: \sum_{n=-\infty}^{\infty} a_n^2 < \infty \right\}.
$$

Now, we recall the \emph{inner-outer decomposition theorem} in the theory of Hardy spaces.
\begin{thm}[Theorem 2.8 in \cite{du70}]\label{iod}
Every function $f(z) \not \equiv 0$ in $\mathcal{H}_p$ has a unique factorization of the form $f(z)=B(z)S(z)F(z)$, where
\begin{itemize}
\item $B(z)$ is a {\bf \em Blaschke product} taking the following form:
\begin{equation} \label{Blaschke-product}
B(z)=z^m \prod_{n} \frac{|z_n|}{z_n}\frac{z_n-z}{1-\bar{z}_n z}=z^m \prod_{n} |z_n|\frac{1-z_n^{-1}z}{1-\bar{z}_n z},
\end{equation}
where $m$ is a nonnegative integer and $\{z_n\}$ is the set of all the zeros of $f(z)$ in $\mathbb{D}$,

\item $S(z)$ is a {\bf \em singular inner function}, which can be represented by the following Poisson-Stieltjes integral:
\begin{equation} \label{singular-inner-function}
S(z)=\exp\left\{-\int_{-\pi}^{\pi}\frac{e^{i\theta}+z}{e^{i\theta}-z}d\mu(\theta)\right\},
\end{equation}
where $\mu(\theta)$ is a bounded nondecreasing singular function with $\mu'(t)=0$ a.e.,

\item $F(z)$ is an {\bf \em outer function} taking the following form:
\begin{equation} \label{outer-function}
F(z)=e^{i \gamma} \exp\left\{\int_{-\pi}^{\pi}\frac{e^{i\theta}+z}{e^{i\theta}-z}\log|f(e^{i\theta})|\frac{d\theta}{2\pi}\right\},
\end{equation}
where $\gamma$ is a real constant.
\end{itemize}
\end{thm}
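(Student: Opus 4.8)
The plan is to peel off the three factors in succession: deal first with the zeros of $f$ by extracting a Blaschke product, and then treat the resulting zero-free remainder by representing its log-modulus as a Poisson--Stieltjes integral and splitting that integral, via the Lebesgue decomposition, into an absolutely continuous piece (the outer function) and a singular piece (the singular inner function).

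First I would control the zeros. Let $\{z_n\}$ be the zeros of $f$ in $\mathbb{D}$, repeated with multiplicity. Applying Jensen's formula to the rescaled functions $f(rz)$, which are analytic on a neighborhood of $\overline{\mathbb{D}}$ (after dividing out any zero at the origin so that $f(0)\neq 0$), and letting $r \uparrow 1$, the upper bound on $\int_{-\pi}^{\pi} \log|f(re^{i\theta})|\,d\theta$ coming from $f \in \mathcal{H}_p$ yields the \emph{Blaschke condition} $\sum_n (1 - |z_n|) < \infty$. This condition guarantees that the Blaschke product $B(z)$ in (\ref{Blaschke-product}) converges locally uniformly on $\mathbb{D}$, lies in $\mathcal{H}_\infty$ with $|B(z)| \le 1$, vanishes exactly on $\{z_n\}$ (and to order $m$ at $0$), and has unimodular radial limits a.e.\ on $\partial\mathbb{D}$. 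Dividing, I set $g = f/B$; comparing the integral means of $|f/B_N|$ over the finite partial products $B_N$ and passing to the limit shows $g \in \mathcal{H}_p$ with $\|g\|_{\mathcal{H}_p} = \|f\|_{\mathcal{H}_p}$, that $g$ has no zeros in $\mathbb{D}$, and that $|g(e^{i\theta})| = |f(e^{i\theta})|$ a.e.

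Next I would treat the zero-free factor $g$. Since $g \neq 0$, $\log|g(z)|$ is harmonic on $\mathbb{D}$, so $\int_{-\pi}^{\pi}\log|g(re^{i\theta})|\,\frac{d\theta}{2\pi} = \log|g(0)|$ for every $r$; combining this with the bound $\log^+ x \le x^p/p$, which makes $\int_{-\pi}^{\pi} \log^+|g(re^{i\theta})|\,d\theta$ uniformly bounded, I obtain that the signed measures $d\nu_r = \log|g(re^{i\theta})|\,\frac{d\theta}{2\pi}$ have uniformly bounded total variation. By Helly's selection theorem a subsequence converges weak-$*$ to a finite signed measure $\nu$, and harmonicity upgrades this to the Poisson--Stieltjes representation $\log|g(z)| = \int_{-\pi}^{\pi} \mathrm{Re}\,\frac{e^{it}+z}{e^{it}-z}\,d\nu(t)$. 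Writing the Lebesgue decomposition $d\nu = \log|g(e^{it})|\,\frac{dt}{2\pi} + d\nu_s$, where the absolutely continuous density is identified with the boundary log-modulus via Fatou's theorem, I would define the outer function $F$ by (\ref{outer-function}) from the absolutely continuous part and the singular inner function $S$ by (\ref{singular-inner-function}) with $d\mu = -d\nu_s$. By construction $\log|SF| = \log|g|$, so $g$ and $SF$ are zero-free analytic functions of equal modulus and hence agree up to a unimodular constant, which I absorb into the factor $e^{i\gamma}$ of $F$; thus $f = BSF$.

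The crux, and the main obstacle, is verifying that the singular part $\nu_s$ is \emph{non-positive}, equivalently that $\mu = -\nu_s$ is a bounded \emph{nondecreasing} singular function, so that $S$ is genuinely inner ($|S| \le 1$) rather than an unbounded singular factor. This is precisely the statement that an $\mathcal{H}_p$ function belongs to the Smirnov class, and it is what distinguishes the canonical factorization from the weaker factorization available for arbitrary functions of bounded characteristic. I would establish it from the sharpened mean inequality $\log|g(z)| \le \int_{-\pi}^{\pi} \mathrm{Re}\,\frac{e^{it}+z}{e^{it}-z}\,\log|g(e^{it})|\,\frac{dt}{2\pi}$, valid for $\mathcal{H}_p$ functions, which forces any singular contribution to decrease $\log|g|$. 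Finally, uniqueness of the decomposition follows from the uniqueness of the zero set (hence of $B$), together with the uniqueness of the representing measure $\nu$ and of its Lebesgue decomposition (hence of $S$ and $F$ up to the unimodular constant), completing the proof.
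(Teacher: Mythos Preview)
The paper does not provide its own proof of this statement: it is quoted as a preliminary result from the literature (Theorem~2.8 of Duren~\cite{du70}) and used without proof in Section~\ref{armakgfcmp}. There is therefore nothing in the paper to compare your proposal against.

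That said, your sketch is the standard route to the canonical factorization and is essentially the argument one finds in Duren. The main steps---establishing the Blaschke condition from Jensen's formula and the $\mathcal{H}_p$ bound, dividing out $B$ to get a zero-free $g\in\mathcal{H}_p$ with the same boundary modulus, representing $\log|g|$ as a Poisson--Stieltjes integral via a weak-$*$ compactness argument, and then invoking the Lebesgue decomposition of the representing measure---are all correct in outline. You are also right to flag the sign of the singular part as the delicate point; the inequality you cite, that $\log|g(z)|$ is dominated by the Poisson integral of its boundary log-modulus, is exactly what pins down $d\mu\ge 0$ and is the content of the Smirnov-class membership of $\mathcal{H}_p$ functions. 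For the purposes of this paper, however, no such argument is needed: the theorem is simply invoked.
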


\begin{rem} \label{ic}
Note that it can be shown that $B(z)$ as in (\ref{Blaschke-product}) is analytic on $\mathbb{D}$ with the same set of zeros as $f(z)$, and $S(z)$ and $F(z)$ are also analytic without any zeros in $\mathbb{D}$. Furthermore, it is well known (see, e.g., Page $84$ of~\cite{koosis80}) that $S(z) \equiv 1$ if and only if
$$
\int_{-\pi}^{\pi}\log|B(re^{i\theta}) S(re^{i\theta})|d\theta\to0\quad\mbox{as }r\to1.
$$
\end{rem}

Roughly speaking, the following theorem says that a function in $\mathcal{H}_p$ is uniquely determined by its boundary values on any set of positive measure.
\begin{thm}[Theorem $2.2$ in~\cite{du70}] \label{unique-function}
Let $f(e^{i \theta}) \in\mathcal{H}_p$ be not identically $0$. Then $\{e^{i\theta}|f(e^{i\theta})=0\}$ has measure $0$ (with respect to the Lebesgue measure on $\partial \mathbb{D}$). Furthermore, if $f(e^{i \theta}), g(e^{i \theta}) \in\mathcal{H}_p$ and $f(e^{i\theta})=g(e^{i\theta})$ for all $\theta$ in a positive measure subset $T \subset [-\pi,\pi)$, then $f(e^{i \theta})=g(e^{i \theta})$ almost everywhere.
\end{thm}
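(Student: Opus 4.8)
The plan is to deduce the second (``furthermore'') assertion from the first and then prove the first via Jensen's formula. For the reduction, recall that $\mathcal{H}_p$ is a vector space, so $h:=f-g\in\mathcal{H}_p$; if $f=g$ on a set $T$ of positive measure, then $h$ vanishes on $T$. Were the first assertion already established, a nonzero element of $\mathcal{H}_p$ would have a boundary zero set of measure $0$, so $h$ could not vanish on the positive–measure set $T$ unless $h\equiv0$; hence $f=g$ almost everywhere. Everything therefore hinges on the first claim: a nonzero $f\in\mathcal{H}_p$ satisfies $f(e^{i\theta})\neq0$ for almost every $\theta$.

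The core of the argument is to show $\log|f(e^{i\theta})|\in\mathcal{L}_1(\partial\mathbb{D})$, since an $\mathcal{L}_1$ function is finite almost everywhere while $\log|f(e^{i\theta})|=-\infty$ exactly where $f(e^{i\theta})=0$; integrability thus forces $\{e^{i\theta}:f(e^{i\theta})=0\}$ to be a null set. After factoring out a possible zero of $f$ at the origin (which leaves the boundary modulus unchanged, as $|e^{i\theta}|=1$), I may assume $f(0)\neq0$. For each $0<r<1$ the dilate $f_r(z):=f(rz)$ is analytic on an open set containing $\overline{\mathbb{D}}$, so Jensen's formula applies and gives
\[
\log|f(0)|=\sum_k\log|z_k(r)|+\int_{-\pi}^{\pi}\log|f(re^{i\theta})|\frac{d\theta}{2\pi},
\]
where $\{z_k(r)\}$ are the zeros of $f_r$ in $\mathbb{D}$. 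Since every $|z_k(r)|<1$, the sum is nonpositive, yielding the uniform lower bound $\int_{-\pi}^{\pi}\log|f(re^{i\theta})|\,\frac{d\theta}{2\pi}\ge\log|f(0)|$.

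For a matching upper bound on the positive part, I would use $\log^+ y\le y$ and the definition of the $\mathcal{H}_p$ norm to get
\[
\int_{-\pi}^{\pi}\log^+|f(re^{i\theta})|\frac{d\theta}{2\pi}\le\frac{1}{p}\int_{-\pi}^{\pi}|f(re^{i\theta})|^p\frac{d\theta}{2\pi}\le\frac{1}{p}\|f\|_{\mathcal{H}_p}^p,
\]
uniformly in $r$. Writing $\log=\log^+-\log^-$ and combining this with the lower bound from Jensen's formula gives the uniform estimate $\int_{-\pi}^{\pi}\log^-|f(re^{i\theta})|\,\frac{d\theta}{2\pi}\le\frac{1}{p}\|f\|_{\mathcal{H}_p}^p-\log|f(0)|$. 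Finally, since $f(re^{i\theta})\to f(e^{i\theta})$ almost everywhere as $r\to1$, the nonnegative functions $\log^-|f(re^{i\theta})|$ converge a.e.\ to $\log^-|f(e^{i\theta})|$, and Fatou's lemma yields $\int_{-\pi}^{\pi}\log^-|f(e^{i\theta})|\,\frac{d\theta}{2\pi}<\infty$. Hence $\log|f(e^{i\theta})|>-\infty$ a.e., so $f(e^{i\theta})\neq0$ a.e., as claimed.

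The main obstacle is precisely this limit $r\to1$: the integrand $\log|f(re^{i\theta})|$ is neither monotone in $r$ nor dominated by a fixed integrable function, so one cannot pass to the limit directly. The remedy is to treat the two parts separately—the positive part is controlled by $|f|^p$ through the $\mathcal{H}_p$ norm, while the negative part is nonnegative and therefore amenable to Fatou's lemma—with the uniform two-sided bound supplied by Jensen's formula gluing the estimates together. Should one prefer to avoid the limiting analysis altogether, an alternative is to invoke the inner–outer factorization of Theorem~\ref{iod}: writing $f=B S F$, one has $|f|=|F|$ a.e.\ on $\partial\mathbb{D}$ because $|B|=|S|=1$ a.e., and the outer function $F$ is zero-free on $\mathbb{D}$ with boundary modulus $|F(e^{i\theta})|=|f(e^{i\theta})|$; the very existence of the representation~\eqref{outer-function} presupposes $\log|f|\in\mathcal{L}_1$, from which $|f|>0$ a.e.\ follows at once.
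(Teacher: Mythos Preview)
The paper does not supply a proof of this statement: it is quoted in the preliminaries as Theorem~2.2 of Duren~\cite{du70} and used as a black box, so there is no ``paper's own proof'' to compare against. Your argument is correct and is in fact the standard one (essentially Duren's): Jensen's formula applied to the dilates $f_r$ gives the uniform lower bound $\int\log|f(re^{i\theta})|\,\frac{d\theta}{2\pi}\ge\log|f(0)|$, the $\mathcal{H}_p$ norm controls $\int\log^+|f(re^{i\theta})|$, and Fatou's lemma then yields $\int\log^-|f(e^{i\theta})|<\infty$, whence $f(e^{i\theta})\neq0$ a.e. The reduction of the second assertion to the first via $h=f-g\in\mathcal{H}_p$ is also standard and correct.

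One remark on your alternative route through Theorem~\ref{iod}: within the paper's logical framework this is legitimate, since the inner--outer factorization is stated there as an independent black box, and its mere applicability to a nonzero $f\in\mathcal{H}_p$ already forces $\log|f(e^{i\theta})|\in\mathcal{L}_1$. In Duren's own development, however, the integrability of $\log|f|$ (his Theorem~2.2) is established first and then used to build the outer factor, so the direct Jensen/Fatou argument you give is the non-circular one and the appropriate choice if you are proving the result from scratch.
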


\section{Main Results} \label{main-results}

\subsection{Uniqueness of Optimal $C(e^{i \theta})$} \label{unique-solution-section}

Recall that $C(e^{i \theta})$ is defined as in (\ref{c-def}), and we say $C(e^{i \theta})$ is an optimal solution if it solve the optimization problem (\ref{starting-point2}), namely, it satisfies (\ref{c-power}) and achieves the maximum in (\ref{starting-point2}). In this section, we will establish the uniqueness of optimal $C(e^{i \theta})$.

We will first need the following lemma.
\begin{lem} \label{less-than-one}
Let $C^{\star}(e^{i \theta})$ be an optimal solution to (\ref{starting-point2}). Then, for any $C(e^{i\theta})$ satisfying (\ref{c-power}), we have
$$
\int_{-\pi}^{\pi} \frac{|C(e^{i\theta})+H_Z(e^{i\theta})|^2}{|C^{\star}(e^{i\theta})+H_Z(e^{i\theta})|^2} \frac{d\theta}{2\pi} \leq 1.
$$
\end{lem}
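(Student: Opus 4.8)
The plan is to expand the integrand around the optimal output spectrum and then pin down the resulting cross term exactly using the strong orthogonality condition. Write $g \triangleq C^{\star}(e^{i\theta})+H_Z(e^{i\theta})$ for the optimal output spectral factor, so that $|g|^2 = |C^{\star}(e^{i\theta})+H_Z(e^{i\theta})|^2$ is the optimal output spectrum, and set $D \triangleq C(e^{i\theta})-C^{\star}(e^{i\theta})$, which is strictly causal since both $C$ and $C^{\star}$ are. Because $C(e^{i\theta})+H_Z(e^{i\theta}) = g + D$, I would use $|g+D|^2/|g|^2 = 1 + 2\Re(D/g) + |D|^2/|g|^2$ to reduce the claimed inequality to showing that $2\Re\int_{-\pi}^{\pi} (D/g)\,\tfrac{d\theta}{2\pi} + \int_{-\pi}^{\pi} (|D|^2/|g|^2)\,\tfrac{d\theta}{2\pi} \le 0$.

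First I would evaluate the cross term, which is where the optimality of $C^{\star}$ enters. By the output-spectrum condition (\ref{c-star-output-spectrum}), $|g|^2$ is bounded away from $0$, so $1/g$ is bounded and all integrals below converge. The strong orthogonality condition (\ref{c-star-strong-orthogonality}) says precisely that $\Psi \triangleq \lambda/g - C^{\star}(e^{-i\theta})$ is causal; pairing a causal function with the strictly causal $D$ annihilates the zero-th Fourier coefficient of the product, giving $\int_{-\pi}^{\pi}\Psi D\,\tfrac{d\theta}{2\pi}=0$. This produces the key identity $\int_{-\pi}^{\pi}\frac{\lambda D}{g}\tfrac{d\theta}{2\pi} = \int_{-\pi}^{\pi} C^{\star}(e^{-i\theta})D(e^{i\theta})\tfrac{d\theta}{2\pi} = \sum_{k\ge 1} c_k^{\star}(c_k - c_k^{\star})$, where I use that real coefficients give $C^{\star}(e^{-i\theta})=\overline{C^{\star}(e^{i\theta})}$ together with Parseval. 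In particular the cross term is real, so $2\Re\int_{-\pi}^{\pi}(D/g)\,\tfrac{d\theta}{2\pi} = \tfrac{2}{\lambda}\sum_{k\ge 1} c_k^{\star}(c_k - c_k^{\star})$.

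Next I would bound the quadratic term: the output-spectrum condition yields $|g|^2 \ge \eta \ge \lambda$ almost everywhere, hence $\int_{-\pi}^{\pi}(|D|^2/|g|^2)\,\tfrac{d\theta}{2\pi} \le \tfrac{1}{\lambda}\int_{-\pi}^{\pi}|D|^2\,\tfrac{d\theta}{2\pi} = \tfrac{1}{\lambda}\sum_{k\ge 1}(c_k - c_k^{\star})^2$. Adding the two estimates, the cross terms cancel and what remains is $\tfrac{1}{\lambda}\big(\sum_{k\ge 1} c_k^2 - \sum_{k\ge 1}(c_k^{\star})^2\big) = \tfrac{1}{\lambda}\big(\int_{-\pi}^{\pi}|C|^2\tfrac{d\theta}{2\pi} - P\big)$, using the power condition (\ref{c-star-power}) to identify $\sum_{k\ge 1}(c_k^{\star})^2 = P$. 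Since the constraint (\ref{c-power}) forces $\int_{-\pi}^{\pi}|C|^2\tfrac{d\theta}{2\pi}\le P$, this quantity is nonpositive, which finishes the proof. The one step to watch is the orthogonality identity: it rests on reading (\ref{c-star-strong-orthogonality}) as vanishing of the negative Fourier modes, on the reality of the coefficients $c_k,c_k^{\star}$, and on confirming $1/g\in\mathcal{L}_\infty$ so the pairing $\int \Psi D$ is legitimate; the remainder is routine Parseval bookkeeping.
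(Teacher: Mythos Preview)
Your proof is correct and follows essentially the same approach as the paper's own proof: both expand $|C+H_Z|^2$ around $C^{\star}+H_Z$ as $|g+D|^2$, use the strong orthogonality condition (\ref{c-star-strong-orthogonality}) to identify the cross term with $\tfrac{1}{\lambda}\int \overline{C^{\star}}\,D$, bound the quadratic term $|D|^2/|g|^2$ via $|g|^2\ge\lambda$ from (\ref{c-star-output-spectrum}), and conclude using the power conditions (\ref{c-star-power}) and (\ref{c-power}). The only cosmetic difference is that you carry out the final bookkeeping in Fourier coefficients via Parseval, whereas the paper stays in integral form throughout.
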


\begin{proof}
Note that
\begin{align*}
\int_{-\pi}^{\pi} \frac{|C+H_Z|^2}{|C^{\star}+H_Z|^2} \frac{d\theta}{2\pi} & =\int_{-\pi}^{\pi} \frac{|C^{\star}+H_Z+C-C^{\star}|^2}{|C^{\star}+H_Z|^2} \frac{d \theta}{2\pi}\\
&\stackrel{(a)}{=}\int_{-\pi}^{\pi} \frac{|C^{\star}+H_Z|^2+|C-C^{\star}|^2+2(\overline{C^{\star}}+\overline{H_Z})(C-C^{\star})}{|C^{\star}+H_Z|^2} \frac{d\theta}{2\pi}\\
&= 1 + \int_{-\pi}^{\pi} \frac{|C-C^{\star}|^2}{|C^{\star}+H_Z|^2} d \theta+ 2 \int_{-\pi}^{\pi} \frac{C}{C^{\star}+H_Z} \frac{d \theta}{2 \pi}-2 \int_{-\pi}^{\pi} \frac{C^{\star}}{C^{\star}+H_Z} \frac{d \theta}{2\pi},
\end{align*}
where in deriving (a) we have used the easily verifiable fact that
$$
\int_{-\pi}^{\pi} \frac{(\overline{C^{\star}}+\overline{H_Z})(C-C^{\star})}{|C^{\star}+H_Z|^2} \frac{d\theta}{2\pi} = \int_{-\pi}^{\pi} \frac{(C^{\star}+H_Z)(\overline{C}-\overline{C^{\star}})}{|C^{\star}+H_Z|^2} \frac{d\theta}{2\pi}.
$$
Moreover, by (\ref{c-star-strong-orthogonality}), we have for almost all $\theta$,
$$
|C^*+H_Z|^2 \geq \lambda,
$$
and
$$
\int_{-\pi}^{\pi} \left(\frac{1}{C^{\star}+H_Z}- \frac{\overline{C^{\star}}}{\lambda} \right) C^{\star} d\theta=0, \quad \int_{-\pi}^{\pi} \left(\frac{1}{C^{\star}+H_Z}- \frac{\overline{C^{\star}}}{\lambda} \right) C d\theta=0.
$$
It then follows that for any $C(e^{i \theta})$ satisfying (\ref{c-power}),
\begin{align*}
\int_{-\pi}^{\pi} \frac{|C+H_Z|^2}{|C^{\star}+H_Z|^2} \frac{d\theta}{2\pi} & \stackrel{(b)}{\leq} 1 + \frac{1}{\lambda} \int_{-\pi}^{\pi} |C-C^{\star}|^2 \frac{d \theta}{2\pi}- \frac{2 P}{\lambda}+\frac{2}{\lambda} \int_{-\pi}^{\pi} C \overline{C^{\star}} \frac{d \theta}{2\pi}\\
&= 1 + \frac{1}{\lambda} \int_{-\pi}^{\pi} |C|^2 \frac{d\theta}{2\pi}+\frac{1}{\lambda} \int_{-\pi}^{\pi} |C^{\star}|^2 \frac{d\theta}{2\pi}-\frac{2}{\lambda} \int_{-\pi}^{\pi} C \overline{C^{\star}} \frac{d \theta}{2\pi}- \frac{2 P}{\lambda}+\frac{2}{\lambda} \int_{-\pi}^{\pi} C \overline{C^{\star}} \frac{d \theta}{2\pi}\\
&\leq 1,
\end{align*}
where we have used (\ref{c-star-power}) in deriving (b).
\end{proof}

The following theorem first shows that all optimal $C(e^{i \theta})$ give rise to the same $S_Y(e^{i \theta})$, the corresponding channel output PSD, and then establishes the uniqueness of optimal $C(e^{i \theta})$ when the channel noise is not white.
\begin{thm} \label{unique-output-input}
a) For any two optimal $C^{\star}(e^{i \theta})$ and $C^{\star \star}(e^{i \theta})$, we have, almost everywhere,
$$
S_Y^{\star}(e^{i \theta})=S_Y^{\star \star}(e^{i \theta}).
$$
b) Suppose that $\{Z_n\}$ is not white, that is, $S_Z(e^{i \theta})$ is not a constant function. Then, for any two optimal $C^{\star}(e^{i \theta})$ and $C^{\star \star}(e^{i \theta})$, we have, almost everywhere,
$$
C^{\star}(e^{i \theta})=C^{\star \star}(e^{i \theta}).
$$
\end{thm}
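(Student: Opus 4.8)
The plan is to establish (a) first and then use it as the starting point for (b). For (a), I would invoke Lemma~\ref{less-than-one} twice with the roles of the two optimizers interchanged. Since $C^{\star\star}$ is feasible (it obeys the power constraint (\ref{c-power})), applying the lemma with $C=C^{\star\star}$ relative to the optimizer $C^{\star}$ yields $\int_{-\pi}^{\pi}\frac{S_Y^{\star\star}}{S_Y^{\star}}\frac{d\theta}{2\pi}\le 1$, while swapping the two optimizers gives $\int_{-\pi}^{\pi}\frac{S_Y^{\star}}{S_Y^{\star\star}}\frac{d\theta}{2\pi}\le 1$. Both integrands are finite and positive almost everywhere because the output-spectrum condition (\ref{c-star-output-spectrum}) forces $S_Y^{\star},S_Y^{\star\star}\ge\eta>0$. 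Adding the two inequalities and setting $u=S_Y^{\star\star}/S_Y^{\star}$, I obtain $\int_{-\pi}^{\pi}(u+u^{-1})\frac{d\theta}{2\pi}\le 2$; since $u+u^{-1}\ge 2$ pointwise with equality only at $u=1$, the inequality must be an equality and $u=1$ a.e., which is exactly $S_Y^{\star}=S_Y^{\star\star}$ a.e.

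For (b) write $f=C^{\star}+H_Z$ and $g=C^{\star\star}+H_Z$, both in $\mathcal{H}_2$; part (a) says $|f|=|g|$ a.e. By the inner--outer decomposition (Theorem~\ref{iod}), the outer factor of an $\mathcal{H}_2$ function is determined by its boundary modulus through (\ref{outer-function}), so $f$ and $g$ share a common outer part and I can write $f=I_f O$, $g=I_g O$ with $I_f,I_g$ inner. Matching zeroth Fourier coefficients, $f(0)=g(0)=H_Z(0)>0$ (as $C^{\star},C^{\star\star}$ are strictly causal), normalizes $I_f(0)=I_g(0)$. The whole problem thus reduces to proving $I_f=I_g$, after which $f=g$ and hence $C^{\star}=C^{\star\star}$ a.e. I would also record that $f$ is \emph{not} outer: if $1/f$ were causal, then (\ref{c-star-strong-orthogonality}) would force its strictly anticausal part $\overline{C^{\star}}$ to vanish, i.e. $C^{\star}\equiv 0$, contradicting (\ref{c-star-power}) with $P>0$. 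So a nontrivial inner factor is genuinely present and must be pinned down.

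The heart of the argument, and the only place non-whiteness can enter, is showing $I_f=I_g$. Using $\overline{C^{\star}}=\overline f-\overline{H_Z}$ and $\overline f=S_Y^{\star}/f$, the strong-orthogonality condition (\ref{c-star-strong-orthogonality}) is equivalent to the statement that $\frac{\lambda-S_Y^{\star}}{f}+\overline{H_Z}$ is causal, and likewise $\frac{\mu-S_Y^{\star\star}}{g}+\overline{H_Z}$ is causal for $C^{\star\star}$. When the noise is white the optimal $S_Y$ is constant, one may take $\lambda=\eta=S_Y$, the factor $\lambda-S_Y$ vanishes identically, and the condition degenerates; this is precisely the origin of the genuine non-uniqueness (a free unimodular Blaschke parameter) in the white case. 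For non-white noise I would instead argue that the common optimal $S_Y=|O|^2$ is non-constant, so $S_Y-\lambda\not\equiv 0$, and that the two water levels coincide, $\lambda=\mu$. Subtracting the two rewritten conditions then cancels $\overline{H_Z}$ and leaves $\frac{(S_Y-\lambda)\,d}{fg}$ causal, where $d=C^{\star}-C^{\star\star}=(I_f-I_g)O$ is strictly causal. Combining this causality with a spectral factorization $S_Y-\lambda=|L|^2$ and the inner--outer structure of $f,g$, I would force the strictly anticausal part $\overline d$ to vanish, giving $I_f=I_g$; Theorem~\ref{unique-function} then upgrades any boundary identity obtained this way to an almost-everywhere identity.

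I expect the main obstacle to be exactly this final step. Two points require genuine work: (i) proving that for non-white noise the optimal output spectrum $S_Y$ is non-constant and that the water levels satisfy $\lambda=\mu$, since both underlie the clean subtracted identity; and (ii) handling the spectral factorization of $S_Y-\lambda$ when this nonnegative function touches zero on a measure-zero set, so that $1/f$, $1/g$ and the outer factor $L$ stay under control while one extracts $\overline d=0$. The white-noise computation, where $S_Y$ is constant and the Blaschke factor carries a free phase, is the sanity check that any correct argument must collapse there and nowhere else.
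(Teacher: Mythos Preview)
Your argument for (a) is correct and in fact a bit cleaner than the paper's: the paper applies Lemma~\ref{less-than-one} only once and then combines it with $\log x\le x-1$ and the equality of the two optimal values, whereas you apply the lemma symmetrically and finish with $u+u^{-1}\ge 2$. Both routes are short and equivalent in strength.

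For (b), however, your plan has a genuine gap at exactly the point you flag. Once you assume $\lambda=\mu$ and subtract the two rewritten strong-orthogonality conditions, the resulting statement ``$(\lambda-S_Y)(1/f-1/g)$ is causal'' is \emph{tautological}: writing $S_Y/f=\bar f$ and $S_Y/g=\bar g$, it reads ``$\lambda(1/f-1/g)-\bar d$ is causal'', and by the original conditions the strictly anticausal part of $\lambda/f$ is $\overline{C^{\star}}$ and that of $\lambda/g$ is $\overline{C^{\star\star}}$, so the anticausal part of the difference is $\bar d-\bar d=0$ automatically. In other words, subtracting the two causality conditions recovers nothing beyond what you started with, and no amount of inner--outer bookkeeping or spectral factorization of $S_Y-\lambda$ will extract $d=0$ from it. Moreover, the preliminary claim $\lambda=\mu$ is neither proved nor needed in the paper.

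The paper's route is substantially different: rather than subtracting causality statements, it \emph{tests} each strong-orthogonality condition against the strictly causal functions $C^{\star}$ and $C^{\star\star}$ to produce four scalar identities, (\ref{first-equation})--(\ref{fourth-equation}). Combining these (and using part (a) together with the two power equalities) yields
\[
\int_{-\pi}^{\pi}\Bigl(\tfrac{1}{\lambda^{\star}}+\tfrac{1}{\lambda^{\star\star}}-\tfrac{2}{S_Y}\Bigr)\,|C^{\star\star}-C^{\star}|^{2}\,\frac{d\theta}{2\pi}=0,
\]
with a pointwise nonnegative integrand by (\ref{c-sstar-output-spectrum}) and (\ref{c-dstar-output-spectrum}). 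If the first factor vanished a.e., then $S_Y$ would be constant and the strong-orthogonality condition would force $\overline{H_Z}$ to be causal, hence $\{Z_n\}$ white; so for non-white noise the first factor is strictly positive on a set of positive measure, whence $C^{\star}=C^{\star\star}$ there, and Theorem~\ref{unique-function} finishes. This quantitative $L^2$ identity, not available from mere subtraction of causality conditions, is the missing idea in your proposal.
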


\begin{proof}
a) Using the well-known fact that for any $x > 0$,
\begin{equation} \label{log-minus-1}
\log x \leq x-1,
\end{equation}
we deduce that for all $\theta$,
\begin{equation} \label{log-minus-2}
\log\frac{S_Y^{\star \star}(e^{i \theta})}{S_Y^{\star}(e^{i \theta})}\le\frac{S_Y^{\star \star}(e^{i \theta})}{S_Y^{\star}(e^{i \theta})}-1,
\end{equation}
and thereby
\begin{equation}  \label{equal-in-the-end}
\hspace{-1cm} \int_{-\pi}^{\pi} \log S_Y^{\star \star}(e^{i \theta}) \frac{d \theta}{2\pi} \le \int_{-\pi}^{\pi} \left(\frac{S_Y^{\star \star}(e^{i \theta})}{S_Y^{\star}(e^{i \theta})}-1+\log S_Y^{\star}(e^{i \theta}) \right) \frac{d \theta}{2\pi} \stackrel{(a)}{\le} \int_{-\pi}^{\pi} \log S_Y^{\star}(e^{i \theta}) \frac{d \theta}{2\pi} \stackrel{(b)}{=} \int_{-\pi}^{\pi} \log S_Y^{\star \star}(e^{i \theta}) \frac{d \theta}{2\pi},
\end{equation}
where (a) follows from Lemma~\ref{less-than-one} and (b) follows from the fact the optimal solutions $C^{\star}(e^{i \theta})$ and $C^{\star \star}(e^{i \theta})$ give rise to the same optimal value. It then follows that the first inequality in (\ref{equal-in-the-end}) is in fact an equality, or equivalently,
$$
\int_{-\pi}^{\pi} \log\frac{S_Y^{\star \star}(e^{i \theta})}{S_Y^{\star}(e^{i \theta})} \frac{d\theta}{2\pi} = \int_{-\pi}^{\pi} \left(\frac{S_Y^{\star \star}(e^{i \theta})}{S_Y^{\star}(e^{i \theta})}-1 \right) \frac{d \theta}{2\pi},
$$
which, together with (\ref{log-minus-2}), immediately implies that almost everywhere,
$$
\log\frac{S_Y^{\star \star}(e^{i \theta})}{S_Y^{\star}(e^{i \theta})} = \frac{S_Y^{\star \star}(e^{i \theta})}{S_Y^{\star}(e^{i \theta})}-1.
$$
Now, using the fact that $\log x = x-1$ if and only if $x=1$, we deduce that for almost all $\theta$
$$
\frac{S_Y^{\star \star}(e^{i \theta})}{S_Y^{\star}(e^{i \theta})}=1,
$$
which immediately implies a), as desired.

b) We first consider the optimal solution $C^{\star}$, which satisfies i), ii) and iii) in Theorem~\ref{theorem1-reformulated}, which can be alternatively stated below:
\begin{itemize}
\item
\begin{equation}\label{c-sstar-power}
  \int_{-\pi}^{\pi}|C^{\star}(e^{i\theta})|^2 \frac{d\theta}{2\pi}= P;
\end{equation}
\item For some $\lambda^{\star} > 0$
\begin{equation}\label{c-sstar-strong-orthogonality}
  \frac{\lambda^{\star}}{C^{\star}(e^{i \theta})+H_Z(e^{i \theta})}-\overline{C^{\star}(e^{i \theta})}
\end{equation}
is causal;
\item For almost all $\theta\in[-\pi,\pi)$,
\begin{equation}\label{c-sstar-output-spectrum}
\lambda^{\star} \le |C^{\star}(e^{i\theta})+H_Z(e^{i \theta})|^2,
\end{equation}
where $\lambda^{\star}$ is as in (\ref{c-sstar-strong-orthogonality}).
\end{itemize}
From (\ref{c-sstar-strong-orthogonality}), straightforward computations yield that
\begin{equation} \label{first-equation}
\int_{-\pi}^{\pi}\left(\frac{C^{\star}(e^{i \theta})(\overline{C^{\star}(e^{i \theta})}+\overline{H_Z(e^{i \theta})})}{|C^{\star}(e^{i \theta})+H_Z(e^{i \theta})|^2}-\frac{1}{\lambda^{\star}}|C^{\star}(e^{i \theta})|^2\right)\frac{d\theta}{2\pi}=0,
\end{equation}
\begin{equation} \label{second-equation}
\int_{-\pi}^{\pi}\left(\frac{C^{\star\star}(e^{i \theta})(\overline{C^{\star}(e^{i \theta})}+\overline{H_Z(e^{i \theta})})}{|C^{\star}(e^{i \theta})+H_Z(e^{i \theta})|^2}-\frac{1}{\lambda^{\star}}C^{\star\star}(e^{i \theta})\overline{C^{\star}(e^{i \theta})} \right)\frac{d\theta}{2\pi}=0.
\end{equation}
Now, we consider the optimal solution $C^{\star \star}$, which similarly satisfies:
\begin{itemize}
\item
\begin{equation}\label{c-dstar-power}
  \int_{-\pi}^{\pi}|C^{\star \star}(e^{i\theta})|^2 \frac{d\theta}{2\pi}= P;
\end{equation}
\item For some $\lambda^{\star \star} > 0$
\begin{equation}\label{c-dstar-strong-orthogonality}
  \frac{\lambda^{\star \star}}{C^{\star \star}(e^{i \theta})+H_Z(e^{i \theta})}-\overline{C^{\star \star}(e^{i \theta})}
\end{equation}
is causal;
\item For almost all $\theta\in[-\pi,\pi)$,
\begin{equation}\label{c-dstar-output-spectrum}
\lambda^{\star \star} \le |C^{\star \star}(e^{i\theta})+H_Z(e^{i \theta})|^2,
\end{equation}
where $\lambda^{\star \star}$ is as in (\ref{c-sstar-strong-orthogonality}).
\end{itemize}
And parallel to (\ref{first-equation}) and (\ref{second-equation}), we have
\begin{equation} \label{third-equation}
\int_{-\pi}^{\pi}\left(\frac{C^{\star\star}(e^{i \theta})(\overline{C^{\star\star}(e^{i \theta})}+\overline{H_Z(e^{i \theta})})}{|C^{\star\star}(e^{i \theta})+H_Z(e^{i \theta})|^2}-\frac{1}{\lambda^{\star \star}}|C^{\star\star}(e^{i \theta})|^2\right)\frac{d\theta}{2\pi}=0,
\end{equation}
\begin{equation} \label{fourth-equation}
\int_{-\pi}^{\pi}\left(\frac{C^{\star}(e^{i \theta})(\overline{C^{\star\star}(e^{i \theta})}+\overline{H_Z(e^{i \theta})})}{|C^{\star\star}(e^{i \theta})+H_Z(e^{i \theta})|^2}-\frac{1}{\lambda^{\star \star}}C^{\star}(e^{i \theta})\overline{C^{\star\star}(e^{i \theta})} \right)\frac{d\theta}{2\pi}=0.
\end{equation}
Note that, by a), we have almost everywhere,
\begin{equation} \label{sstar-dstar}
|C^{\star}(e^{i \theta})+H_Z(e^{i \theta})|^2=|C^{\star\star}(e^{i \theta})+H_Z(e^{i \theta})|^2.
\end{equation}
Now, using (\ref{c-sstar-power}), (\ref{c-dstar-power}) and (\ref{sstar-dstar}), we deduce that (\ref{first-equation})-(\ref{second-equation})+(\ref{third-equation})-(\ref{fourth-equation}) can be simplified as
\begin{equation}\label{eq1}
\int_{-\pi}^{\pi}\left(\frac{|C^{\star\star}(e^{i \theta})-C^{\star}(e^{i \theta})|^2}{|C^{\star}(e^{i \theta})+H_Z(e^{i \theta})|^2}-\frac{1}{2}\left(\frac{1}{\lambda^{\star}}+\frac{1}{\lambda^{\star \star}}\right) |C^{\star\star}(e^{i \theta})-C^{\star}(e^{i \theta})|^2\right)\frac{d\theta}{2\pi}=0,
\end{equation}
or equivalently,
\begin{equation} \label{eq1-equivalent}
\int_{-\pi}^{\pi} \left(\frac{1}{\lambda^{\star}}+\frac{1}{\lambda^{\star \star}}-\frac{2}{|C^{\star}(e^{i \theta})+H_Z(e^{i \theta})|^2} \right)|C^{\star\star}(e^{i \theta})-C^{\star}(e^{i \theta})|^2 \frac{d\theta}{2\pi}=0.
\end{equation}
Note that, by (\ref{c-sstar-output-spectrum}), (\ref{c-dstar-output-spectrum}) and (\ref{sstar-dstar}), we have, for almost all $\theta$,
\begin{equation} \label{one-over-lambda}
\frac{1}{\lambda^{\star}}\ge\frac{1}{|C^{\star}(e^{i \theta})+H_Z(e^{i \theta})|^2}, \quad \frac{1}{\lambda^{\star \star}}\ge\frac{1}{|C^{\star \star}(e^{i \theta})+H_Z(e^{i \theta})|^2},
\end{equation}
which means the integrand in (\ref{eq1-equivalent}) is non-negative, and thereby must be $0$, that is,
\begin{equation} \label{integrand-zero}
\left(\frac{1}{\lambda^{\star}}+\frac{1}{\lambda^{\star \star}}-\frac{2}{|C^{\star}(e^{i \theta})+H_Z(e^{i \theta})|^2}\right)|C^{\star\star}(e^{i \theta})-C^{\star}(e^{i \theta})|^2=0
\end{equation}
for almost all $\theta\in[-\pi,\pi)$.

We now claim that there exists a positive measure set $T \subset \partial \mathbb{D}$ such that on $T$
\begin{equation} \label{set-T}
\frac{1}{\lambda^{\star}}+\frac{1}{\lambda^{\star \star}}>\frac{2}{|C^{\star}(e^{i \theta})+H_Z(e^{i \theta})|^2}.
\end{equation}
To see this, by way of contradiction, we suppose the opposite is true, that is, almost everywhere,
$$
\frac{1}{\lambda^{\star}}+\frac{1}{\lambda^{\star \star}} \leq \frac{2}{|C^{\star}(e^{i \theta})+H_Z(e^{i \theta})|^2},
$$
which, together with (\ref{one-over-lambda}), immediately implies that almost everywhere
$$
\frac{1}{\lambda^{\star}}=\frac{1}{\lambda^{\star \star}}=\frac{1}{|C^{\star}(e^{i \theta})+H_Z(e^{i \theta})|^2}.
$$
Some straightforward computations employing this yield
\begin{align*}
\frac{\lambda^{\star}}{C^{\star}(e^{i \theta})+H_Z(e^{i \theta})}-\overline{C^{\star}(e^{i \theta})} & = \frac{\lambda^{\star}(\overline{C^{\star}(e^{i \theta})}+\overline{H_Z(e^{i \theta})})}{|C^{\star}(e^{i \theta})+H(e^{i \theta})|^2}-\overline{C^{\star}(e^{i \theta})}\\
& = \overline{C^{\star}(e^{i \theta})}+\overline{H_Z(e^{i \theta})}-\overline{C^{\star}(e^{i \theta})}\\
& = \overline{H_Z(e^{i \theta})},
\end{align*}
which, together with (\ref{c-sstar-strong-orthogonality}), immediately implies that $\overline{H_Z}(e^{i \theta})$ is causal. Since $H_Z(e^{i \theta})$ is causal, we deduce that $H_Z(e^{i \theta})$ is a constant, and thereby $S_Z(e^{i \theta})$ is also a constant, a contradiction to the assumption that $\{Z_n\}$ is not white.

Now, with the claim in (\ref{set-T}), we infer from (\ref{integrand-zero}) that on the positive measure set $T \subset \partial \mathbb{D}$,
$$
C^{\star}(e^{i \theta}) = C^{\star \star}(e^{i \theta}),
$$
which, by Theorem~\ref{unique-function}, immediately implies b).
\end{proof}

\subsection{Computation of Optimal $C(e^{i \theta})$} \label{recursive-procedure}

Assuming $\{Z_n\}$ is not white, we give in this section a recursive algorithm to compute the unique optimal solution $C(e^{i \theta})$.

We will first consider the the following optimization problem and establish the uniqueness of its optimal solution:
\begin{align}
&\mbox{maximize} & &\hspace{-2cm} \int_{-\pi}^{\pi}\frac{|C(e^{i\theta})+H_Z(e^{i\theta})|^2}{|C^{\star}(e^{i\theta})+H_Z(e^{i\theta})|^2} \frac{d\theta}{2\pi} \nonumber \\
&\mbox{subject to} & &\hspace{-2cm} \int_{-\pi}^{\pi}|C(e^{i\theta})|^2 \frac{d\theta}{2\pi}\le P, \label{op2}
\end{align}
where $C^{\star}(e^{i\theta})$ is the unique optimal solution to (\ref{starting-point2}).

\begin{thm} \label{another-necessary-sufficient}
A solution $C^{\star \star}(e^{i\theta})$ to (\ref{op2}) is optimal if and only if the following conditions are satisfied:
\begin{itemize}
\item[i)]
\begin{equation}\label{c-ddstar-power}
  \int_{-\pi}^{\pi}|C^{\star \star}(e^{i\theta})|^2 \frac{d\theta}{2\pi}= P;
\end{equation}
\item[ii)] For some $\lambda > 0$
\begin{equation}\label{c-ddstar-strong-orthogonality}
  \frac{\lambda (\overline{C^{\star \star}(e^{i \theta})}+\overline{H_Z(e^{i \theta})})}{|C^{\star}(e^{i \theta})+H(e^{i \theta})|^2}-\overline{C(e^{i \theta})}
\end{equation}
is causal;
\item[iii)] For almost all $\theta\in[-\pi,\pi)$,
\begin{equation}\label{c-ddstar-output-spectrum}
  \lambda \le |C^{\star \star}(e^{i\theta})+H_Z(e^{i \theta})|^2
\end{equation}
where $\lambda$ is as in (\ref{c-ddstar-strong-orthogonality}).
\end{itemize}
\end{thm}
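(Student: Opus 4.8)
The plan is to read \eqref{op2} as the maximization of the weighted quadratic functional
\[
J(C)=\int_{-\pi}^{\pi}w(\theta)\,|C(e^{i\theta})+H_Z(e^{i\theta})|^2\,\frac{d\theta}{2\pi},\qquad w(\theta):=\frac{1}{|C^{\star}(e^{i\theta})+H_Z(e^{i\theta})|^2},
\]
over the strictly causal $C$ lying in the $\mathcal{L}_2$-ball of radius $\sqrt{P}$. Since $C^{\star}$ is optimal for \eqref{starting-point2}, its strong orthogonality yields $|C^{\star}+H_Z|^2\ge\lambda^{\star}>0$ a.e.\ (cf.\ \eqref{c-sstar-output-spectrum}), so $w$ is a bounded, strictly positive, even weight with real Fourier coefficients. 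The argument parallels the variational characterization of Theorem~\ref{theorem1-reformulated} and the telescoping computation behind Lemma~\ref{less-than-one}, except that we now expand the objective around the candidate $C^{\star\star}$ and use \emph{its} conditions \eqref{c-ddstar-power}--\eqref{c-ddstar-output-spectrum}. Throughout I write $\|f\|^2:=\int_{-\pi}^{\pi}|f(e^{i\theta})|^2\frac{d\theta}{2\pi}$.

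For sufficiency, assume $C^{\star\star}$ satisfies \eqref{c-ddstar-power}--\eqref{c-ddstar-output-spectrum}. With $\Delta:=C-C^{\star\star}$ (strictly causal, as a difference of strictly causal functions), expanding gives
\[
J(C)=J(C^{\star\star})+\int_{-\pi}^{\pi}w|\Delta|^2\frac{d\theta}{2\pi}+2\,\mathrm{Re}\int_{-\pi}^{\pi}w(\overline{C^{\star\star}}+\overline{H_Z})\Delta\,\frac{d\theta}{2\pi}.
\]
By \eqref{c-ddstar-strong-orthogonality}, $\lambda w(\overline{C^{\star\star}}+\overline{H_Z})-\overline{C^{\star\star}}$ is causal; its product with the strictly causal $\Delta$ is strictly causal, whose integral is its vanishing zeroth Fourier coefficient, so the cross term collapses to $\tfrac{2}{\lambda}\mathrm{Re}\int\overline{C^{\star\star}}\Delta$. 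Inserting the polarization identity $2\mathrm{Re}\int\overline{C^{\star\star}}\Delta=\|C\|^2-\|C^{\star\star}\|^2-\|\Delta\|^2$ and the power equality \eqref{c-ddstar-power} produces the exact identity
\[
J(C)-J(C^{\star\star})=\int_{-\pi}^{\pi}\Big(w-\tfrac{1}{\lambda}\Big)|\Delta|^2\frac{d\theta}{2\pi}+\tfrac{1}{\lambda}\big(\|C\|^2-P\big).
\]
The second term is $\le0$ for feasible $C$ by \eqref{c-power}, and the first is $\le0$ once $w\le 1/\lambda$ pointwise a.e., which is the output-spectrum bound; hence $J(C)\le J(C^{\star\star})$ and $C^{\star\star}$ is optimal.

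For necessity, suppose $C^{\star\star}$ is optimal. Condition \eqref{c-ddstar-power} follows from strict convexity of $J$: if $\|C^{\star\star}\|^2<P$, then for any nonzero strictly causal $D$ the segment $C^{\star\star}+tD$ stays feasible for small $|t|$ while $J$ is strictly larger at one endpoint, contradicting optimality; so the optimum lies on the power sphere. The Lagrange multiplier rule on this sphere (constraint gradient $2C^{\star\star}\ne0$, so the constraint qualification holds) gives $\mu\in\mathbb{R}$ with $\mathrm{Re}\int[w(\overline{C^{\star\star}}+\overline{H_Z})-\mu\overline{C^{\star\star}}]D\,\tfrac{d\theta}{2\pi}=0$ for every strictly causal $D$; since all functions involved have real Fourier coefficients, this promotes to causality of the bracket, which is exactly \eqref{c-ddstar-strong-orthogonality} with $\lambda=1/\mu$. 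Positivity $\mu>0$ (hence $\lambda>0$) is obtained by excluding the degenerate alternative as whiteness was excluded in Theorem~\ref{unique-output-input}(b). Finally, the constrained second-order necessary condition, i.e.\ concavity of the Lagrangian along tangent directions, yields the pointwise bound \eqref{c-ddstar-output-spectrum}.

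The main obstacle is the Hardy-space bookkeeping that converts the variational identities into clean causality and positivity statements. Two points require care. First, the pairing $\mathrm{Re}\int G\,D\,\tfrac{d\theta}{2\pi}=0$ over real-coefficient strictly causal $D$ a priori annihilates only the real parts of the negative-index coefficients of $G$; it is the reality and symmetry of $w$, $C^{\star\star}$ and $H_Z$ that upgrade this to genuine causality, and this must be checked rather than assumed. Second, and most delicate, is the output-spectrum bound needed to sign $\int(w-1/\lambda)|\Delta|^2$: the weight is $1/|C^{\star}+H_Z|^2$ whereas \eqref{c-ddstar-output-spectrum} is phrased through $|C^{\star\star}+H_Z|^2$, so one must reconcile the two spectra — the natural route being to show, in the spirit of Theorem~\ref{unique-output-input}(a), that an optimal $C^{\star\star}$ reproduces the output spectrum $|C^{\star}+H_Z|^2$, so that the threshold $\lambda$ and the weight are compatible. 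This is precisely the step that makes the characterization usable in the subsequent uniqueness analysis for \eqref{op2}.
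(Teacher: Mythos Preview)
Your sufficiency argument and the Lagrange-multiplier derivation of (i)--(ii) match the paper's Appendix~\ref{proof-another-necessary-sufficient} essentially line for line. The discrepancy you flag at the end---that the weight involves $|C^{\star}+H_Z|^2$ while condition~(iii) is stated through $|C^{\star\star}+H_Z|^2$---is real, but it is a typo in the theorem statement rather than a mathematical obstacle: the paper's own proof derives and uses $\lambda\le|C^{\star}+H_Z|^2$ (see~\eqref{condition3}), which is exactly what you need to sign $\int(w-1/\lambda)|\Delta|^2$, and the sufficiency half of Appendix~\ref{proof-another-necessary-sufficient} likewise bounds $|C^{\star}+H_Z|^2$ from below by $\lambda$. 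Your proposed workaround---first proving that an optimal $C^{\star\star}$ for \eqref{op2} reproduces the spectrum $|C^{\star}+H_Z|^2$---would be circular, since Theorem~\ref{unique-solution} invokes the present theorem to establish precisely that. Simply read condition~(iii) with $C^{\star}$ in place of $C^{\star\star}$.

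The genuine gap in your proposal is the necessity of~(iii). ``Concavity of the Lagrangian along tangent directions'' gives, for each finite tangent $(z_1,\dots,z_n)$ orthogonal to $(c_1^{\star\star},\dots,c_n^{\star\star})$, the inequality $z^{T}(\lambda A^{(n)}-I)z\le0$, where $A^{(n)}$ is the $n\times n$ Toeplitz section of the symbol $w=|C^{\star}+H_Z|^{-2}$. This says only that at most one eigenvalue of $A^{(n)}$ exceeds $1/\lambda$; passing from this to the \emph{pointwise} bound $w(\theta)\le1/\lambda$ a.e.\ is not automatic. The paper fills this step by invoking the Grenander--Szeg\H{o} theorem on the eigenvalue distribution of Toeplitz forms: the second-largest eigenvalue of $A^{(n)}$ converges to $\mathrm{ess\,sup}_{\theta}\,w(\theta)$ as $n\to\infty$, so $\mathrm{ess\,sup}\,w\le1/\lambda$ and hence $\lambda\le|C^{\star}+H_Z|^2$ a.e. You should either reproduce this argument or cite it explicitly; a bare appeal to the second-order condition does not deliver an $L^{\infty}$ conclusion.
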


\begin{proof}
The proof is very similar to that of Theorem~\ref{theorem1}, and thus postponed to Appendix~\ref{proof-another-necessary-sufficient}.
\end{proof}

\begin{thm} \label{unique-solution}
Assume that $\{Z_n\}$ is not white. Then the optimal solution to (\ref{op2}) is unique.
\end{thm}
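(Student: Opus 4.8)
The plan is to argue directly from Lemma~\ref{less-than-one}, tracking its equality case, rather than going through the necessary-and-sufficient conditions of Theorem~\ref{another-necessary-sufficient}. First I would observe that $C^{\star}$ is itself feasible for (\ref{op2}): it satisfies (\ref{c-sstar-power}), hence the power constraint, and it attains the objective value $\int_{-\pi}^{\pi}\frac{|C^{\star}+H_Z|^2}{|C^{\star}+H_Z|^2}\frac{d\theta}{2\pi}=1$. Since Lemma~\ref{less-than-one} shows that every feasible $C$ gives a value at most $1$, the optimal value of (\ref{op2}) equals $1$ and is attained by $C^{\star}$. Consequently any optimal solution $C^{\star\star}$ of (\ref{op2}) also attains the value $1$, so the chain of inequalities in the proof of Lemma~\ref{less-than-one} must collapse to equalities when $C=C^{\star\star}$.

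Then I would re-run that computation, keeping the two inequalities exact. Writing $\lambda^{\star}$ for the multiplier attached to $C^{\star}$ in (\ref{c-sstar-strong-orthogonality})--(\ref{c-sstar-output-spectrum}), the algebra in the proof of Lemma~\ref{less-than-one} yields the identity
\[
\int_{-\pi}^{\pi}\frac{|C^{\star\star}+H_Z|^2}{|C^{\star}+H_Z|^2}\frac{d\theta}{2\pi}=1+\int_{-\pi}^{\pi}\Big(\frac{1}{|C^{\star}+H_Z|^2}-\frac{1}{\lambda^{\star}}\Big)|C^{\star\star}-C^{\star}|^2\frac{d\theta}{2\pi}+\frac{1}{\lambda^{\star}}\Big(\int_{-\pi}^{\pi}|C^{\star\star}|^2\frac{d\theta}{2\pi}-P\Big).
\]
The first correction term is $\le 0$ because $\lambda^{\star}\le|C^{\star}+H_Z|^2$ almost everywhere by (\ref{c-sstar-output-spectrum}), and the second is $\le 0$ by the power constraint; equality to $1$ forces both to vanish. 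The vanishing of the first, together with the sign $\frac{1}{\lambda^{\star}}-\frac{1}{|C^{\star}+H_Z|^2}\ge 0$, gives the pointwise relation $|C^{\star\star}-C^{\star}|^2\big(\tfrac{1}{\lambda^{\star}}-\tfrac{1}{|C^{\star}+H_Z|^2}\big)=0$ for almost every $\theta$; in particular $C^{\star\star}=C^{\star}$ almost everywhere on the set $T:=\{\theta:|C^{\star}+H_Z|^2>\lambda^{\star}\}$.

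The crux is to show that $T$ has positive measure precisely when $\{Z_n\}$ is not white. Since $|C^{\star}+H_Z|^2\ge\lambda^{\star}$ almost everywhere, $T$ fails to have positive measure only if $|C^{\star}+H_Z|^2=\lambda^{\star}$ almost everywhere. In that case $\frac{\lambda^{\star}}{C^{\star}+H_Z}=\frac{\lambda^{\star}(\overline{C^{\star}}+\overline{H_Z})}{|C^{\star}+H_Z|^2}=\overline{C^{\star}}+\overline{H_Z}$, so the strong-orthogonality expression (\ref{c-sstar-strong-orthogonality}) reduces to $\overline{H_Z}$; its causality forces $H_Z$ to be constant, since a causal function whose conjugate is also causal has Fourier support only at the origin. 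This means $\{Z_n\}$ is white, contradicting the hypothesis. Hence $T$ has positive measure, and applying Theorem~\ref{unique-function} to the two $\mathcal{H}_2$ functions $C^{\star\star}$ and $C^{\star}$, which agree on the positive-measure set $T$, gives $C^{\star\star}=C^{\star}$ almost everywhere, establishing uniqueness.

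I expect the sign bookkeeping in re-running Lemma~\ref{less-than-one} to be routine; the essential step is the positive-measure claim for $T$, i.e.\ ruling out that the optimal output spectrum $|C^{\star}+H_Z|^2$ is almost everywhere equal to the flat level $\lambda^{\star}$, which is exactly where the ``not white'' hypothesis is used. I would also note that the alternative route, mirroring the proof of Theorem~\ref{unique-output-input}(b) through the conditions of Theorem~\ref{another-necessary-sufficient}, runs into the obstacle that condition (\ref{c-ddstar-output-spectrum}) bounds the multiplier of $C^{\star\star}$ by its own output spectrum $|C^{\star\star}+H_Z|^2$ rather than by the fixed weight $|C^{\star}+H_Z|^2$; that route would first require knowing the two output spectra coincide, whereas the direct argument above sidesteps this difficulty entirely.
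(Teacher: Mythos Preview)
Your proposal is correct and takes a genuinely different, more economical route than the paper. The paper invokes Theorem~\ref{another-necessary-sufficient} to obtain KKT-type multipliers $\lambda^{\star}$ and $\lambda^{\star\star}$ for \emph{both} $C^{\star}$ and $C^{\star\star}$ as optimizers of (\ref{op2}), derives four orthogonality equations, and combines them to reach the pointwise relation
\[
\Bigl(\tfrac{1}{\lambda^{\star}}+\tfrac{1}{\lambda^{\star\star}}-\tfrac{2}{|C^{\star}+H_Z|^2}\Bigr)\,|C^{\star\star}-C^{\star}|^2=0,
\]
after which the positive-measure and $\mathcal{H}_2$-uniqueness steps are identical to yours. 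By contrast, you bypass Theorem~\ref{another-necessary-sufficient} entirely: the exact identity obtained by tracking the two inequalities in Lemma~\ref{less-than-one} uses only the multiplier $\lambda^{\star}$ attached to $C^{\star}$ via Theorem~\ref{theorem1-reformulated}, and produces directly
\[
\Bigl(\tfrac{1}{\lambda^{\star}}-\tfrac{1}{|C^{\star}+H_Z|^2}\Bigr)\,|C^{\star\star}-C^{\star}|^2=0.
\]
This is cleaner in two respects: it dispenses with the auxiliary characterization in Theorem~\ref{another-necessary-sufficient}, and it avoids the issue you point out in your final paragraph, namely that the condition $\lambda^{\star\star}\le|C^{\star\star}+H_Z|^2$ from (\ref{c-ddstar-output-spectrum}) does not by itself control $\lambda^{\star\star}$ against the fixed weight $|C^{\star}+H_Z|^2$. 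The paper's symmetric four-equation scheme buys a closer structural parallel with the proof of Theorem~\ref{unique-output-input}(b), but your argument shows that this symmetry is not needed here.
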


\begin{proof}
Note that by Lemma~\ref{less-than-one}, we have for any $C(e^{i\theta})$ satisfying (\ref{c-power}),
$$
\int_{-\pi}^{\pi} \frac{|C(e^{i\theta})+H_Z(e^{i\theta})|^2}{|C^{\star}(e^{i\theta})+H_Z(e^{i\theta})|^2} \frac{d\theta}{2\pi} \leq 1.
$$
In other words, other than being the unique optimal solution to (\ref{starting-point2}), $C^{\star}(e^{i\theta})$ is also one of the optimal solution to (\ref{op2}). Let $C^{\star \star}(e^{i\theta})$ be another optimal solution to (\ref{op2}). Then, by Theorem~\ref{another-necessary-sufficient}, $C^{\star}(e^{i\theta})$ and $C^{\star \star}(e^{i\theta})$ satisfy (\ref{c-ddstar-power}), (\ref{c-ddstar-strong-orthogonality}) and (\ref{c-ddstar-output-spectrum}) with $\lambda^{\star}$ and $\lambda^{\star \star}$, respectively. Now, a completely parallel argument as in the proof of Theorem~\ref{unique-output-input} will yield
$$
\int_{-\pi}^{\pi}\left(\frac{C^{\star}(e^{i \theta})(\overline{C^{\star}(e^{i \theta})}+\overline{H_Z(e^{i \theta})})}{|C^{\star}(e^{i \theta})+H_Z(e^{i \theta})|^2}-\frac{1}{\lambda^{\star}}|C^{\star}(e^{i \theta})|^2\right)\frac{d\theta}{2\pi}=0,
$$
$$
\int_{-\pi}^{\pi}\left(\frac{C^{\star\star}(e^{i \theta})(\overline{C^{\star}(e^{i \theta})}+\overline{H_Z(e^{i \theta})})}{|C^{\star}(e^{i \theta})+H_Z(e^{i \theta})|^2}-\frac{1}{\lambda^{\star}}C^{\star\star}(e^{i \theta})\overline{C^{\star}(e^{i \theta})} \right)\frac{d\theta}{2\pi}=0,
$$
$$
\int_{-\pi}^{\pi}\left(\frac{C^{\star\star}(e^{i \theta})(\overline{C^{\star\star}(e^{i \theta})}+\overline{H_Z(e^{i \theta})})}{|C^{\star}(e^{i \theta})+H_Z(e^{i \theta})|^2}-\frac{1}{\lambda^{\star \star}}|C^{\star\star}(e^{i \theta})|^2\right)\frac{d\theta}{2\pi}=0,
$$
$$
\int_{-\pi}^{\pi}\left(\frac{C^{\star}(e^{i \theta})(\overline{C^{\star\star}(e^{i \theta})}+\overline{H_Z(e^{i \theta})})}{|C^{\star}(e^{i \theta})+H_Z(e^{i \theta})|^2}-\frac{1}{\lambda^{\star \star}}C^{\star}(e^{i \theta})\overline{C^{\star\star}(e^{i \theta})} \right)\frac{d\theta}{2\pi}=0,
$$
which will collectively imply
\begin{equation}\label{eq2}
\int_{-\pi}^{\pi}\left(\frac{|C^{\star\star}(e^{i \theta})-C^{\star}(e^{i \theta})|^2}{|C^{\star}(e^{i \theta})+H_Z(e^{i \theta})|^2}-\frac{1}{2}\left(\frac{1}{\lambda^{\star}}+\frac{1}{\lambda^{\star \star}}\right) |C^{\star\star}(e^{i \theta})-C^{\star}(e^{i \theta})|^2\right)\frac{d\theta}{2\pi}=0,
\end{equation}
and furthermore
\begin{equation} \label{eq2-equivalent}
\left(\frac{1}{\lambda^{\star}}+\frac{1}{\lambda^{\star \star}}-\frac{2}{|C^{\star}(e^{i \theta})+H_Z(e^{i \theta})|^2}\right)|C^{\star\star}(e^{i \theta})-C^{\star}(e^{i \theta})|^2=0.
\end{equation}
for almost all $\theta \in [-\pi,\pi)$. The remainder of the proof then uses exactly the same argument as in the proof of Theorem \ref{unique-output-input} to establish
$$
C^{\star}(e^{i \theta}) = C^{\star\star}(e^{i \theta})
$$
almost everywhere and thereby the uniqueness of the optimal solution to (\ref{op2}).
\end{proof}

Now, we consider the following algorithm to compute the optimal $C^{i \theta}$ via recursively solving a sequence of optimization problems:
\begin{algo}  \label{sequential-algorithm}
\begin{enumerate}
\item[1)] Arbitrarily choose $C^{(0)}(e^{i\theta}) \in \mathcal{H}_2$ satisfying
$$
\int_{-\pi}^{\pi}|C^{(0)}(e^{i\theta})|^2 \frac{d\theta}{2\pi}\le P.
$$
\item[2)] For $n=0,1,\dots$, solve the following optimization problem
\begin{align}
&\mbox{minimize} & &\hspace{-2cm} \int_{-\pi}^{\pi}\frac{|C^{(n)}(e^{i\theta})+H_Z(e^{i\theta})|^2}{|C(e^{i\theta})+H_Z(e^{i\theta})|^2} \frac{d\theta}{2\pi}\nonumber\\
&\mbox{subject to} & &\hspace{-2cm} \int_{-\pi}^{\pi}|C(e^{i\theta})|^2 \frac{d\theta}{2\pi}\le P, \label{op3}
\end{align}
and then set $C^{(n+1)}(e^{i\theta})$ to be one of the optimal solutions.
\item[3)] Set $n=n+1$ and repeat 2).
\end{enumerate}
\end{algo}
\noindent Obviously, the above recursive procedure yields a sequence of functions $\{C^{(n)}(e^{i\theta})\}$ in $\mathcal{H}_2$. The following theorem discusses the convergence behavior of this sequence.
\begin{thm} \label{iteration}
Assume that $\{Z_n\}$ is not white. If there is a pointwise convergent subsequence $\{C^{(n_k)}(e^{i \theta})\}$ such that
\begin{equation}  \label{we-missed-before}
\lim_{k \to \infty} \int_{-\pi}^{\pi} |C^{(n_k)}(e^{i\theta})|^2 \frac{d\theta}{2\pi} = \int_{-\pi}^{\pi}  |\lim_{k \to \infty} C^{(n_k)}(e^{i\theta})|^2 \frac{d\theta}{2\pi},
\end{equation}
then $\{C^{(n_k)}(e^{i \theta})\}$ must converge to $C^{\star}(e^{i\theta})$, the unique optimal solution to (\ref{starting-point2}), almost everywhere.
\end{thm}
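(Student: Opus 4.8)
The plan is to combine a monotonicity property of the iteration with the comparison inequality already established in Lemma~\ref{less-than-one}, and then to pass to the limit along the prescribed subsequence, using hypothesis (\ref{we-missed-before}) precisely to legitimize an interchange of limit and integral. Throughout I abbreviate the objective of (\ref{starting-point2}) by $\Phi(C)=\frac{1}{2}\int_{-\pi}^{\pi}\log|C(e^{i\theta})+H_Z(e^{i\theta})|^2\frac{d\theta}{2\pi}$ and the objective of (\ref{op3}) by $J(C^{(n)},C)=\int_{-\pi}^{\pi}|C^{(n)}+H_Z|^2/|C+H_Z|^2\,\frac{d\theta}{2\pi}$. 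First I would record that the iteration drives $\Phi$ upward. Applying (\ref{log-minus-1}) to $1/x$ (so that $\log x\ge 1-1/x$) with $x=|C^{(n+1)}+H_Z|^2/|C^{(n)}+H_Z|^2$ gives
$$
\Phi(C^{(n+1)})-\Phi(C^{(n)})\ge \tfrac{1}{2}\big(1-J(C^{(n)},C^{(n+1)})\big).
$$
A trivial induction shows every iterate obeys the power constraint, so $C^{(n)}$ is feasible for the problem (\ref{op3}) defining $C^{(n+1)}$; since $C^{(n+1)}$ is a minimizer and $J(C^{(n)},C^{(n)})=1$, we get $J(C^{(n)},C^{(n+1)})\le 1$. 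Hence $\Phi(C^{(n)})$ is non-decreasing, and being bounded above by $C_{FB}$ it converges, forcing its increments to $0$ and therefore $J(C^{(n)},C^{(n+1)})\to 1$.

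Next I would bring in the unique optimal solution $C^{\star}$. Because $C^{\star}$ is feasible and $C^{(n+1)}$ minimizes (\ref{op3}), we have $J(C^{(n)},C^{(n+1)})\le J(C^{(n)},C^{\star})$, while Lemma~\ref{less-than-one} (applied with $C=C^{(n)}$) gives $J(C^{(n)},C^{\star})\le 1$. Squeezing this quantity between $J(C^{(n)},C^{(n+1)})\to 1$ and $1$, I conclude that
$$
\int_{-\pi}^{\pi}\frac{|C^{(n)}(e^{i\theta})+H_Z(e^{i\theta})|^2}{|C^{\star}(e^{i\theta})+H_Z(e^{i\theta})|^2}\frac{d\theta}{2\pi}\longrightarrow 1
$$
along the whole sequence, and in particular along the subsequence $\{n_k\}$.

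The hard part will be to pass to the limit inside this integral. Writing $C^{\infty}=\lim_k C^{(n_k)}$, hypothesis (\ref{we-missed-before})---pointwise convergence together with $\int|C^{(n_k)}|^2\to\int|C^{\infty}|^2$---is exactly the condition that upgrades pointwise convergence to convergence in $\mathcal{L}_2$, via the standard Fatou argument applied to the nonnegative functions $2|C^{(n_k)}|^2+2|C^{\infty}|^2-|C^{(n_k)}-C^{\infty}|^2$. Consequently $|C^{(n_k)}+H_Z|^2\to|C^{\infty}+H_Z|^2$ in $\mathcal{L}_1$ (by Cauchy--Schwarz), and since optimality of $C^{\star}$ forces $|C^{\star}+H_Z|^2\ge\eta>0$ by condition ii) of Theorem~\ref{theorem1-reformulated}, the weight $1/|C^{\star}+H_Z|^2$ is bounded, so dividing by it preserves the $\mathcal{L}_1$ convergence. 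I may therefore move the limit inside to obtain $\int_{-\pi}^{\pi}|C^{\infty}+H_Z|^2/|C^{\star}+H_Z|^2\,\frac{d\theta}{2\pi}=1$, where $C^{\infty}$ is strictly causal with $\int|C^{\infty}|^2\le P$. Without (\ref{we-missed-before}) the cross term in $|C^{(n_k)}+H_Z|^2$ need not converge, which is why this step is the crux.

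Finally I would analyze this equality by revisiting the chain of inequalities in the proof of Lemma~\ref{less-than-one} with $C=C^{\infty}$: equality throughout forces both $\int|C^{\infty}|^2=P$ and $|C^{\infty}-C^{\star}|^2\big(\tfrac{1}{\lambda}-\tfrac{1}{|C^{\star}+H_Z|^2}\big)=0$ almost everywhere, whence $C^{\infty}=C^{\star}$ on the set where $|C^{\star}+H_Z|^2>\lambda$. Invoking the non-whiteness of $\{Z_n\}$ exactly as in part b) of Theorem~\ref{unique-output-input}---if instead $|C^{\star}+H_Z|^2=\lambda$ almost everywhere, then the strong orthogonality condition would make $\overline{H_Z}$ causal, forcing $H_Z$ and hence $S_Z$ to be constant---this set has positive measure. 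Thus $C^{\infty}=C^{\star}$ on a set of positive measure, and Theorem~\ref{unique-function} upgrades this to $C^{\infty}=C^{\star}$ almost everywhere, so the subsequence $\{C^{(n_k)}\}$ converges to $C^{\star}$ almost everywhere, as claimed.
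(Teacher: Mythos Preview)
Your argument is correct and follows the same overall arc as the paper: establish $J(C^{(n)},C^{(n+1)})\to 1$, pass to the limit along the subsequence to get $\int_{-\pi}^{\pi}|C^{\infty}+H_Z|^2/|C^{\star}+H_Z|^2\,\frac{d\theta}{2\pi}=1$, and then identify $C^{\infty}$ with $C^{\star}$. Two points of departure are worth flagging. First, your proof that $J(C^{(n)},C^{(n+1)})\to 1$ via the monotonicity and boundedness of $\Phi(C^{(n)})$ is slightly more direct than the paper's proof by contradiction. Second, and more substantively, the paper concludes by observing that $C^{\infty}$ is an optimal solution of the auxiliary problem~(\ref{op2}) and then invoking Theorem~\ref{unique-solution}, whose proof in turn rests on the KKT characterization in Theorem~\ref{another-necessary-sufficient}; you instead read off the equality case of Lemma~\ref{less-than-one} directly, obtaining $|C^{\infty}-C^{\star}|^2\big(\tfrac{1}{\lambda}-\tfrac{1}{|C^{\star}+H_Z|^2}\big)=0$ a.e., and finish with the non-whiteness argument from Theorem~\ref{unique-output-input}(b) together with Theorem~\ref{unique-function}. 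This bypasses Theorems~\ref{another-necessary-sufficient} and~\ref{unique-solution} entirely and makes the proof self-contained modulo Lemma~\ref{less-than-one} and Theorem~\ref{unique-output-input}. Your explicit justification of the limit--integral interchange (pointwise plus norm convergence in $\mathcal{L}_2$ $\Rightarrow$ $\mathcal{L}_2$ convergence via the Riesz/Fatou device, hence $\mathcal{L}_1$ convergence of $|C^{(n_k)}+H_Z|^2$, then division by the bounded weight $|C^{\star}+H_Z|^{-2}\le 1/\eta$) is also more careful than the paper's brief citation of (\ref{c-sstar-output-spectrum}) and (\ref{we-missed-before}).
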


\begin{proof}
First of all, we will show that
\begin{equation}\label{eq4}
\lim_{n\to\infty}\int_{-\pi}^{\pi} \frac{|C^{(n)}(e^{i\theta})+H_Z(e^{i\theta})|^2}{|C^{(n+1)}(e^{i\theta})+H_Z(e^{i\theta})|^2} \frac{d\theta}{2\pi}=1.
\end{equation}
Apparently, we have, for all $i=0,1,\dots$,
$$
\int_{-\pi}^{\pi}\frac{|C^{(n)}(e^{i\theta})+H_Z(e^{i\theta})|^2}{|C^{(n+1)}(e^{i\theta})+H_Z(e^{i\theta})|^2} \frac{d\theta}{2\pi} \le 1,
$$
which immediately implies that
$$
\limsup_{n \to \infty}\int_{-\pi}^{\pi} \frac{|C^{(n)}(e^{i\theta})+H_Z(e^{i\theta})|^2}{|C^{(n+1)}(e^{i\theta})+H_Z(e^{i\theta})|^2} \frac{d\theta}{2\pi} \le 1.
$$
So, to show (\ref{eq4}), we only need to prove
\begin{equation} \label{liminf-greater-than-one}
\liminf_{n\to\infty}\int_{-\pi}^{\pi} \frac{|C^{(n)}(e^{i\theta})+H_Z(e^{i\theta})|^2}{|C^{(n+1)}(e^{i\theta})+H_Z(e^{i\theta})|^2} \frac{d\theta}{2\pi} \geq 1.
\end{equation}
To show this, suppose, by way of contradiction, that
$$
\liminf_{n\to\infty}\int_{-\pi}^{\pi} \frac{|C^{(n)}(e^{i\theta})+H_Z(e^{i\theta})|^2}{|C^{(n+1)}(e^{i\theta})+H_Z(e^{i\theta})|^2} \frac{d\theta}{2\pi}<1.
$$
Then, there exist $\delta > 0$ and a subsequence $\{C^{(n_j)}(e^{i\theta})\}_{i=0}^\infty$ such that
$$\int_{-\pi}^{\pi}\frac{|C^{(n_j)}(e^{i\theta})+H_Z(e^{i\theta})|^2}{|C^{(n_{j+1})}(e^{i\theta})+H_Z(e^{i\theta})|^2} \frac{d\theta}{2\pi}\le1-\delta$$
for all $j \in \mathbb{N}$. It then follows from
{\footnotesize$$
\int_{-\pi}^{\pi}\log|C^{(n_j)}(e^{i\theta})+H_Z(e^{i\theta})|^2\frac{d\theta}{2\pi}-\int_{-\pi}^{\pi}\log|C^{(n_{j+1})}(e^{i\theta})+H_Z(e^{i\theta})|^2\frac{d\theta}{2\pi}\le\int_{-\pi}^{\pi}\frac{|C^{(n_i)}(e^{i\theta})+H_Z(e^{i\theta})|^2}{|C^{(n_{i+1})}(e^{i\theta})+H_Z(e^{i\theta})|^2} \frac{d\theta}{2\pi}-1\le-\delta,
$$}
that
$$
\lim_{j \to \infty}\int_{-\pi}^{\pi}\log|C^{(n_j)}(e^{i\theta})+H_Z(e^{i\theta})|^2\frac{d\theta}{2\pi}=\infty.
$$
But this would imply that optimal value of the optimization problem is infinity, a contradiction. And therefore we have established (\ref{liminf-greater-than-one}) and thereby (\ref{eq4}).

Now, let $C^\infty(e^{i\theta})$ denote the pointwise limit of the subsequence $\{C^{(n_k)}(e^{i\theta})\}_{k=0}^\infty$. Applying (\ref{c-sstar-output-spectrum}), (\ref{we-missed-before}) and (\ref{eq4}), we deduce that
{\small $$
\hspace{-1cm} \int_{-\pi}^{\pi} \frac{|C^\infty(e^{i\theta})+H_Z(e^{i\theta})|^2}{|C^{\star}(e^{i\theta})+H_Z(e^{i\theta})|^2} \frac{d\theta}{2\pi} = \lim_{n \to \infty} \int_{-\pi}^{\pi} \frac{|C^{(n)}(e^{i\theta})+H_Z(e^{i\theta})|^2}{|C^{\star}(e^{i\theta})+H_Z(e^{i\theta})|^2} \frac{d\theta}{2\pi} \geq \lim_{n \to \infty} \int_{-\pi}^{\pi} \frac{|C^{(n)}(e^{i\theta})+H_Z(e^{i\theta})|^2}{|C^{(n+1)}(e^{i\theta})+H_Z(e^{i\theta})|^2} \frac{d\theta}{2\pi} = 1.
$$}
On the other hand, by Lemma~\ref{less-than-one}, we have
$$
\int_{-\pi}^{\pi} \frac{|C(e^{i\theta})+H_Z(e^{i\theta})|^2}{|C^{\star}(e^{i\theta})+H_Z(e^{i\theta})|^2} \frac{d\theta}{2\pi} \le 1
$$
for any $C(e^{i\theta})$ satisfying (\ref{c-power}). Therefore,
$$
\int_{-\pi}^{\pi} \frac{|C^\infty(e^{i\theta})+H_Z(e^{i\theta})|^2}{|C^{\star}(e^{i\theta})+H_Z(e^{i\theta})|^2} \frac{d\theta}{2\pi} = 1;
$$
in other words, $C^{\infty}(e^{i\theta})$ is an optimal solution to the optimization problem (\ref{op2}). Now, by Theorem~\ref{unique-solution}, we conclude that almost everywhere
$$
C^\infty(e^{i\theta})=C^{\star}(e^{i\theta}),
$$
and thereby completing the proof of the theorem.
\end{proof}

\begin{rem} \label{global-by-local}
Roughly speaking, Theorem~\ref{iteration} says that any convergent subsequence produced by Algorithm~\ref{sequential-algorithm} will converge to the optimal solution to (\ref{starting-point2}). Algorithm~\ref{sequential-algorithm} will practically compute the Gaussian feedback capacity if the global minimum of the optimization problem (\ref{op3}) can be computed. Although this is a feasible task for certain special families of channels, we are not aware of any efficient way to solve the optimization problem in (\ref{op3}) for a general stationary Gaussian channel, which is a great impediment for implementing Algorithm~\ref{sequential-algorithm}. One effective way to circumvent this issue is to find a local minimum in lieu of the global minimum of (\ref{op3}). Obviously, with such a replacement, the performance of the algorithm is compromised in the sense that it will only produce a suboptimal solution. On the other hand though, we have observed that the recursive update in Step 2) provides an effective means to prevent the produced sequence from getting stuck at some local optimal solution locally. As a matter of fact, for many practical channels for which we know the capacity (see Section~\ref{armak-section}), the compromised algorithm appears to be quickly convergent to the true optimal solution; see Example~\ref{arma3-example}.
\end{rem}

\subsection{Optimal $C(e^{i \theta})$ for ARMA($k$) Gaussian Channels}  \label{armak-section}

In this section, we generalize Theorem~\ref{arma1-theorem} and give a more explicit characterization of the optimal solution $C^{\star}(e^{i \theta})$ for the case that $\{Z_n\}$ is an ARMA($k$) Gaussian process.

The proof of our main result in this section will use the following lemma, whose proof closely follows that of Proposition $4.2$ in~\cite{kim10} and is included for completeness.
\begin{lem}\label{causal}
Suppose that the assumptions of Theorem~\ref{theorem1-reformulated} are satisfied. If $C^{\star}$ is an optimal solution to (\ref{starting-point2}), then $\overline{C^{\star}}(C^{\star}+H_Z)$ is causal.
\end{lem}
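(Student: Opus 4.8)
The plan is to extract the conclusion directly from the three optimality conditions in Theorem~\ref{theorem1-reformulated}, with the strong orthogonality condition doing the heavy lifting. Write $G \triangleq C^{\star}+H_Z$ for the (causal) output filter; since $C^{\star}$ is strictly causal and $H_Z \in \mathcal{H}_2$, we have $G \in \mathcal{H}_2$. The target $\overline{C^{\star}}(C^{\star}+H_Z)=\overline{C^{\star}}G$ is a priori only a two-sided ($\mathcal{L}_1$) function, since $\overline{C^{\star}}$ is strictly anti-causal while $G$ is causal, so the assertion that it is causal must be squeezed out of optimality.

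First I would use the output-spectrum condition ii): because $\eta=\operatorname{essinf}|G|^2>0$, we have $|1/G|\le\eta^{-1/2}$ almost everywhere, so $1/G\in\mathcal{L}_\infty\subset\mathcal{L}_2$, and in particular $\lambda/G\in\mathcal{L}_2$. Next I would invoke the strong orthogonality condition iii): the function
$$\phi \triangleq \frac{\lambda}{C^{\star}+H_Z}-\overline{C^{\star}}=\frac{\lambda}{G}-\overline{C^{\star}}$$
is causal. Since both $\lambda/G$ and $\overline{C^{\star}}$ lie in $\mathcal{L}_2$, so does $\phi$; being a causal $\mathcal{L}_2$ function, $\phi$ belongs to $\mathcal{H}_2$ by the characterization of $\mathcal{H}_p$ as the causal subspace of $\mathcal{L}_p$ recalled in Section~\ref{armakgfcmp}.

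Rearranging $\phi=\lambda/G-\overline{C^{\star}}$ gives $\overline{C^{\star}}=\lambda/G-\phi$, and multiplying through by $G$ yields
$$\overline{C^{\star}}\,G=\lambda-\phi\,G.$$
The product $\phi G$ of two functions in $\mathcal{H}_2$ lies in $\mathcal{H}_1$ and is causal: at the level of Fourier coefficients, the convolution of two one-sided (nonnegatively indexed) sequences is again one-sided. Hence $\lambda-\phi G$, a constant minus a causal function, is causal, and therefore $\overline{C^{\star}}(C^{\star}+H_Z)=\overline{C^{\star}}G$ is causal, as desired.

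The step I expect to require the most care is ensuring that $\phi G$ is a bona fide causal $\mathcal{H}_1$ function. This rests on first certifying that $\phi\in\mathcal{H}_2$ rather than merely causal, which is exactly where the output-spectrum condition ii) becomes indispensable: the strict positivity $\eta>0$ forces $1/G\in\mathcal{L}_2$, so that $\phi=\lambda/G-\overline{C^{\star}}$ is genuinely square-integrable and the product rule $\mathcal{H}_2\cdot\mathcal{H}_2\subset\mathcal{H}_1$ legitimately applies; without it, the convolution-of-one-sided-sequences argument that delivers causality could not be invoked.
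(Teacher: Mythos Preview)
Your proof is correct but follows a genuinely different route from the paper's. The paper argues by contradiction via a variational perturbation: assuming $\overline{C^{\star}}(C^{\star}+H_Z)$ has a nonzero Fourier coefficient at some negative index $-n$, it builds the competitor $C^{\star\star}=(1+xe^{in\theta})(C^{\star}+H_Z)-H_Z$, uses Jensen's formula to show the output entropy is unchanged, and computes that the power becomes $P+2\gamma x+P_Y x^2$, which can be made strictly less than $P$ for a suitable small $x$; this contradicts the power-saturation condition~i) of Theorem~\ref{theorem1-reformulated}. Your argument instead leans on conditions~ii) and~iii): the output-spectrum bound puts $1/G$ in $\mathcal{L}_\infty$, so the strong-orthogonality function $\phi=\lambda/G-\overline{C^{\star}}$ lands in $\mathcal{H}_2$, and then the algebraic identity $\overline{C^{\star}}G=\lambda-\phi G$ together with $\mathcal{H}_2\cdot\mathcal{H}_2\subset\mathcal{H}_1$ finishes the job. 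Your route is shorter and purely algebraic once the KKT-type conditions are in hand; the paper's route is a direct variational check that invokes only condition~i), so it does not depend on the strong-orthogonality condition and in effect gives an independent derivation of this consequence of optimality.
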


\begin{proof}
Suppose, by way of contradiction, that $\overline{C^{\star}}(C^{\star}+H_Z)$ is not causal, then for some $n\ge 1$, we have
$$
\int_{-\pi}^{\pi}\overline{C^{\star}}(C^{\star}+H_Z)e^{in\theta}\frac{d\theta}{2\pi}=\gamma \neq 0.
$$
Let $A(e^{i\theta})=xe^{in\theta}$ with $|x|<1$. Then, for $C^{\star\star} \triangleq (1+A)(C^{\star}+H_Z)-H_Z$, one verifies that it is also strictly causal, and furthermore,
\begin{align*}
\log S_Y^{\star\star}&=\log|C^{\star\star}+H_Z|^2\\
&=\log|1+A|^2|C^{\star}+H_Z|^2=\log|1+A|^2S_Y^{\star}.
\end{align*}
By Jensen's formula, the entropy rate of $S_Y^{\star\star}$ is the same as that of $S_Y^{\star}$. On the other hand, the power of $C^{\star}$ can be computed as follows:
\begin{align*}
P^{\star\star}(x)&=\int_{-\pi}^{\pi} |C^{\star\star}|^2 \frac{d \theta}{2\pi}\\
&=\int_{-\pi}^{\pi} |C^{\star}+A(C^{\star}+H_Z)|^2 \frac{d \theta}{2\pi} \\
&=\int_{-\pi}^{\pi} |C^{\star}|^2+2\int A\overline{C^{\star}}(C^{\star}+H_Z)+\int_{-\pi}^{\pi} |A|^2|C^{\star}+H_Z|^2 \frac{d \theta}{2\pi}\\
&=P+2\gamma x+P_Yx^2,
\end{align*}
where $P_Y=\int S_Y^{\star} d \theta/2 \pi >0$. Therefore, we can choose certain $x$ such that $P^{\star\star}(x)<P$, i.e., we can achieve same information rate using less power, which is contradictory to Condition i) of Theorem~\ref{theorem1-reformulated}.
\end{proof}

We are now ready to state the main result of this section.
\begin{thm}\label{theorem3}
Suppose the noise $\{Z_i\}$ is not white with the power spectral density $S_Z(e^{i \theta})$ taking the form as in (\ref{armaknoise}). Then, the feedback capacity $C_{FB}$ can be achieved by $C(z)$ taking the following form:
\begin{equation}\label{formofarmak}
C(z)=\sum_{i=1}^{l}\sum_{j=1}^{m_i}\frac{y_{ij}z^j}{(1-x_iz)^j},
\end{equation}
where $m_i$ are positive integers for all $i=1,2,\dots,l$ and $\sum_{i=1}^{l}m_i\le k$, $x_i\in\mathbb{C}$ are all distinct and $|x_i|<1$ for all $i=1,2,\dots,l$, $y_{ij}\in\mathbb{C}$ for all $i$ and $j$. Furthermore, $C(z)$ is optimal yielding the capacity
\begin{equation} \label{armak-capacity-formula}
C_{FB}=-\log\prod_{i=1}^{l}|x_i|^{m_i}
\end{equation}
if and only if all $x_i$, $m_i$ and $y_{ij}$ satisfy the following four conditions:
\begin{enumerate}
\item[i)] Power:
$$
\left.\sum_{i=1}^{l}\sum_{j=1}^{m_i}\sum_{p=1}^{l}\sum_{q=1}^{m_p}y_{ij}y_{pq}\left(\frac{z^{j-1}}{(1-x_iz)^j}\right)^{(q-1)}\right|_{z=x_p}=P,
$$
where, as elsewhere in this paper, the parenthesized superscript means the derivative with respect to $z$;
\item[ii)] Roots: $x_1,x_2,\dots,x_l$ are the roots of the function
$$
f(z)\triangleq\sum_{i=1}^{l}\sum_{j=1}^{m_i}\frac{y_{ij}z^j}{(1-x_iz)^j}+\frac{\prod_{i=1}^{k}(1+\alpha_iz)}{\prod_{i=1}^{k}(1+\beta_iz)},
$$
that are strictly inside the unit circle, while the other roots $r_1^{-1},r_2^{-1},\dots,r_k^{-1}$ are all strictly outside the unit circle;
\item[iii)] Strong orthogonality: there exists a real number $\lambda>0$ such that for all $i=1,2,\dots,l$ and $j=1,2,\dots,m_i$,
$$h_{ij}(x_i)=\lambda y_{ij}(j-1)!,$$
where
\begin{adjustwidth}{-3em}{0em}
{\small$$
h_{ij}(z) \triangleq \sum_{p=j}^{m_i}\frac{C_{p-1}^{j-1}}{(p-1)!(m_i-p)!}\left(\frac{(z-x_i)^{m_i}}{\prod_{s=1}^{l}(z-x_s)^{m_s}}\right)^{(m_i-p)} \times \left(\frac{\prod_{s=1}^{l}(1-x_sz)^{m_s}(-x_s)^{m_s}\prod_{t=1}^{k}(1+\beta_tz)}{\prod_{t=1}^{k}(1-r_tz)}\right)^{(p-j)}; \label{formofhij}
 $$}
\end{adjustwidth}
\item[iv)] Output spectrum: For almost all $\theta\in[-\pi,\pi)$,
$$
\lambda\ge \frac{1}{S_Y^\star(e^{i\theta})}=\prod_{j=1}^{l}|x_j|^{2m_j}\left|\frac{\prod_{t=1}^{k}(1+\beta_te^{i\theta})}{\prod_{t=1}^{k}(1-r_te^{i\theta})}\right|^2.
$$

\end{enumerate}
\end{thm}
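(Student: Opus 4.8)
The plan is to carry the characterization in Theorem~\ref{theorem2} over to the variable $C$, read off the rational form, compute the value via Jensen's formula, and then translate the three optimality conditions of Theorem~\ref{theorem1-reformulated}. Write $g(z)\triangleq C(z)+H_Z(z)$, so that $S_Y^\star=|g|^2$, and recall $H_Z=P/Q$ with $P(z)=\prod_i(1+\alpha_iz)$, $Q(z)=\prod_i(1+\beta_iz)$. First I would obtain the form (\ref{formofarmak}): by (\ref{filterb}) the optimal $B$ satisfies $B+1=bR/P$, hence $g=(B+1)H_Z=bR/Q$ with $b=A/A^\#$ the normalized Blaschke product (zeros $x_i\in\mathbb{D}$ of total multiplicity $\sum_i m_i\le k$) and $R$ stable of degree at most $k$. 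Thus $C=g-H_Z=(AR-PA^\#)/(QA^\#)$, and since Theorem~\ref{theorem2}(iii) forces $Q\mid PA^\#-RA$, the factor $Q$ cancels, leaving $C=G/A^\#=G/\prod_i(1-x_iz)^{m_i}$ for a polynomial $G$. A partial-fraction expansion, combined with strict causality ($C(0)=0$ and no constant term), rewrites $C$ in the basis $\{z^j/(1-x_iz)^j\}$, which is exactly (\ref{formofarmak}).

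Next I would establish (\ref{armak-capacity-formula}). The function $g=bR/Q$ is analytic on $\mathbb{D}$ because all of its poles (the points $1/x_i$, the reciprocals of the roots of $R$, and the roots of $Q$) lie outside $\overline{\mathbb{D}}$; moreover $g(0)=C(0)+H_Z(0)=0+1=1$, and the zeros of $g$ inside $\mathbb{D}$ are precisely the $x_i$ with multiplicities $m_i$. Jensen's formula then gives $\int_{-\pi}^{\pi}\log|g(e^{i\theta})|\,\frac{d\theta}{2\pi}=\log|g(0)|-\sum_i m_i\log|x_i|=-\sum_i m_i\log|x_i|$, so that $C_{FB}=\tfrac12\int_{-\pi}^{\pi}\log|g|^2\,\frac{d\theta}{2\pi}=-\log\prod_i|x_i|^{m_i}$.

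With the form and value in hand, the four conditions become essentially literal translations of i)--iii) of Theorem~\ref{theorem1-reformulated}. The power condition (\ref{c-star-power}) becomes condition i): writing $\int|C|^2\frac{d\theta}{2\pi}=\frac{1}{2\pi i}\oint_{\partial\mathbb{D}}C(z)C(1/z)\frac{dz}{z}$ (legitimate since the $c_n$ are real) and evaluating by residues at the in-disk poles $z=x_p$ of $C(1/z)=\sum_{p,q}y_{pq}/(z-x_p)^q$ produces the stated double sum; Lemma~\ref{causal} is what guarantees that no spurious contributions survive this reduction. Condition ii) merely records the zero structure of $g$ found above, namely $x_i$ inside and $r_t^{-1}$ outside, the latter being the roots of $R\propto\prod_t(1-r_tz)$. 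Condition iv) is the output-spectrum bound $\lambda\le\mathrm{essinf}\,S_Y^\star$ of Theorem~\ref{theorem1-reformulated}: since the normalized Blaschke product has constant modulus $\prod_i|x_i|^{-m_i}$ on $\partial\mathbb{D}$ (not $1$), one has $S_Y^\star=\prod_i|x_i|^{-2m_i}|R/Q|^2$, and after replacing the multiplier of Theorem~\ref{theorem1-reformulated} by its reciprocal $\lambda$ this bound reads $\lambda\ge 1/S_Y^\star$ with the explicit expression displayed in iv).

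The remaining, and hardest, step is condition iii). The strong-orthogonality requirement (\ref{c-star-strong-orthogonality}), namely that $\lambda_0/g-C(e^{-i\theta})$ be causal (with $\lambda_0$ the multiplier of Theorem~\ref{theorem1-reformulated}), is equivalent---because $C(e^{-i\theta})=C(1/z)=\sum_{i,j}y_{ij}/(z-x_i)^j$ has all of its poles at the in-disk points $x_i$---to the cancellation of principal parts at each $x_i$: the coefficient of $(z-x_i)^{-j}$ in $\lambda_0/g$ must equal $y_{ij}$. Computing that coefficient as $\frac{\lambda_0}{(m_i-j)!}\left.\frac{d^{m_i-j}}{dz^{m_i-j}}\!\left[(z-x_i)^{m_i}/g(z)\right]\right|_{z=x_i}$ and expanding $1/g=QA^\#/(AR)$ by the Leibniz rule yields exactly the function $h_{ij}$, giving $h_{ij}(x_i)=\lambda y_{ij}(j-1)!$ with $\lambda=1/\lambda_0$. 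I expect this Leibniz bookkeeping to be the main obstacle: one must split the product into the factors $(z-x_i)^{m_i}/\prod_s(z-x_s)^{m_s}$ and $\prod_s(1-x_sz)^{m_s}(-x_s)^{m_s}\prod_t(1+\beta_tz)/\prod_t(1-r_tz)$, distribute the derivative correctly to recover the binomial $C_{p-1}^{j-1}$ and the factorials, and simultaneously track both the reciprocal reparametrization $\lambda=1/\lambda_0$ and the conjugation symmetry that makes $\{x_i\}$ and $\{y_{ij}\}$ closed under complex conjugation. Since Theorem~\ref{theorem1-reformulated} is an \emph{if and only if} and each of i)--iv) is equivalent to one of its three conditions, the stated equivalence follows.
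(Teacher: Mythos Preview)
Your proposal is essentially correct but follows a route that differs from the paper's in two places.

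\textbf{Deriving the form (\ref{formofarmak}).} You invoke Theorem~\ref{theorem2} as a black box: from $B+1=bR/P$ you get $g=bR/Q$, cancel $Q$ via Theorem~\ref{theorem2}(iii), and then appeal to a partial-fraction expansion in the basis $\{z^j/(1-x_iz)^j\}$. The paper instead re-derives the rational structure in the $C$-variable from scratch: it uses Lemma~\ref{causal} to show $|Q|^2S_Y^\star$ is a trigonometric polynomial of degree at most $k$, builds the Blaschke factor via the inner--outer decomposition, bounds the number of Blaschke zeros using the strong-orthogonality condition, and then \emph{computes the Fourier coefficients $c_k^\star$ explicitly} by residues. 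Summing those coefficients is what produces (\ref{formofarmak}) directly, with no change-of-basis argument needed. Your route is shorter, but the step ``rewrite $G/A^\#$ in the basis $\{z^j/(1-x_iz)^j\}$'' is not a standard partial-fraction decomposition and requires a linear-independence/dimension count you have not written out; the paper's explicit Fourier computation sidesteps this. Your passing mention of Lemma~\ref{causal} in the power computation is misplaced: the paper uses that lemma to bound the degree of $|Q|^2S_Y^\star$, not to justify the residue evaluation of $\int|C|^2$.

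\textbf{Deriving condition iii).} You reduce the causality of $\lambda_0/g-C(1/z)$ to the matching of principal parts at each $x_i$, then expand $(z-x_i)^{m_i}/g$ by Leibniz. The paper instead equates \emph{all} Fourier coefficients: having already shown $c_k^\star=\sum_{i,j}\tilde y_{ij}(k-1)\cdots(k-j+1)x_i^{k-j}$, it reads off condition iii) by matching the coefficient of $x_i^{k-j}$ in the residue identity $\lambda c_k^\star=\oint z^{k-1}h(z)\prod_s(z-x_s)^{-m_s}\,dz/(2\pi i)$. Your principal-part argument is cleaner conceptually and yields the same $h_{ij}$ after the Leibniz reindexing $q\mapsto p=j+q$; the paper's coefficient-matching is more computational but dovetails with its earlier Fourier derivation of (\ref{formofarmak}), so no extra work is needed. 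Either way one must track the reciprocal reparametrization $\lambda=1/\lambda_0$ carefully to land on the stated factorials.
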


\begin{proof}
Through a similar argument as in the proof of Theorem~\ref{theorem2}, we first show that any capacity achieving $C^{\star}(z)\triangleq\sum_{k=1}^{\infty}c^{\star}_k z^k$ must take the form in (\ref{formofarmak}). To this end, we consider $\hat{S}_Y^*(e^{i\theta}) \triangleq |Q(e^{i\theta})|^2S_Y^{\star}(e^{i\theta})$, which, by straightforward computations, can be rewritten as follows:
\begin{align}
\hat{S}_Y^*(e^{i \theta}) &=|Q(e^{i \theta})|^2|C^{\star}(e^{i \theta})+H_Z(e^{i \theta})|^2 \nonumber\\
&=|Q(e^{i \theta})|^2\overline{C^{\star}(e^{i \theta})}(C^{\star}(e^{i \theta})+H_Z(e^{i \theta}))+|Q(e^{i \theta})|^2\overline{H_Z(e^{i \theta})}(C^{\star}(e^{i \theta})+H_Z(e^{i \theta})) \nonumber\\
&=|Q(e^{i \theta})|^2\overline{C^{\star}(e^{i \theta})}(C^{\star}(e^{i \theta})+H_Z(e^{i \theta}))+\overline{P(e^{i \theta})}Q(e^{i \theta})C^{\star}(e^{i \theta})+|P(e^{i \theta})|^2. \label{S-Rewritten}
\end{align}
Now, it follows from Lemma~\ref{causal}, (\ref{S-Rewritten}) and the fact that $P(z)$ and $Q(z)$ are both polynomials of degree at most $k$ that $\hat{S}_Y^*(e^{i \theta})$ must be of the following form:
$$
\hat{S}_Y^*(e^{i\theta})=s_{-k}e^{-ik\theta}+s_{-k+1}e^{-i(k-1)\theta}+\cdots.
$$
Then, by the fact that $\hat{S}_Y^*(e^{i \theta})$ is symmetric, we deduce that on $\partial \mathbb{D}$, $\hat{S}_Y^*$ can be written as
$$
\hat{S}_Y^*(e^{i\theta})=s_{-k}e^{-ik\theta}+s_{-k+1}e^{-i(k-1)\theta}+\cdots+s_{-k+1}e^{i(k-1)\theta}+s_{-k}e^{ik\theta},
$$
or alternatively, on $\mathbb{D}$,
\begin{equation} \label{S-z}
\hat{S}_Y^*(z)=s_{-k}z^{-k}+s_{-k+1}z^{-(k-1)}+\cdots+s_{-k+1}z^{(k-1)}+s_{-k}z^{k}.
\end{equation}
Note that $\hat{S}_Y^*(e^{i \theta})$ has a canonical factorization (see Page $733, 734$ of~\cite{Priestley82}), namely, it can be written as
\begin{equation} \label{S-1}
\hat{S}_Y^*(e^{i \theta})=\sigma^2R(e^{i \theta}) \overline{R(e^{i \theta})},
\end{equation}
where $\sigma$ is a positive constant and $R(z)$ is a $k$-th order stable polynomial with $R(0)=1$. Now, we consider
\begin{equation} \label{T-1}
T(z) \triangleq \frac{(C^{\star}(z)+H_Z(z))Q(z)}{\sigma R(z)}.
\end{equation}
Since $C^{\star}(z)+H_Z(z)$ is an $\mathcal{H}_2$ function and $Q(z), R(z)$ are both stable polynomials, $T(z)$ is an $\mathcal{H}_2$ function. It then follows from (\ref{S-1}) and (\ref{T-1}) that
\begin{equation}  \label{T-is-1}
T(e^{i \theta})T(e^{-i \theta})=1,
\end{equation}
which, by (\ref{outer-function}), implies that the outer function in the inner-outer decomposition of $T(z)$ is the constant function $1$. Now, by (\ref{S-z}) and (\ref{T-1}), we have
\begin{align*}
  \int_{-\pi}^{\pi}\log|T(re^{i\theta})|d\theta=&\int_{-\pi}^{\pi}\log\left|\frac{(C^{\star}(re^{i\theta})+H_Z(re^{i\theta}))Q(re^{i\theta})}{\sigma R(re^{i\theta})}\right|d\theta\\
  =&\frac{1}{2}\int_{-\pi}^{\pi}\log\frac{|C^{\star}(re^{i\theta})+H_Z(re^{i\theta})|^2|Q(re^{i\theta})|^2}{\sigma^2 |R(re^{i\theta})|^2}d\theta\\
  =&\frac{1}{2}\int_{-\pi}^{\pi}\log\frac{\hat{S}_Y^*(e^{i \theta})}{\sigma^2 |R(re^{i\theta})|^2}d\theta.
\end{align*}
It then follows from (\ref{S-z}) and the fact that $R(z)$ is a stable polynomial that
$$
\lim_{r\to1}\int_{-\pi}^{\pi}\log|T(re^{i\theta})|d\theta=\int_{-\pi}^{\pi}\log|T(e^{i\theta})|d\theta=0,
$$
which, by Remark~\ref{ic}, implies that $T(z)$ is nothing but a Blaschke product, and furthermore, $C^{\star}(z)+H_Z(z)$ must take the following form:
\begin{align}
C^{\star}(z)+H_Z(z)=\frac{\prod_{i=1}^{\infty}(1-x _i^{-1}z)R(z)}{\prod_{i=1}^{\infty}(1-\bar{x}_i z)Q(z)}\label{formofc+hz}
\end{align}
for some complex numbers $x_1, x_2,\dots$ with $|x_j|<1$ for all $j$ and $\prod_{j}|x_j|^2=1/\sigma^2$. By Condition iii) of Theorem~\ref{theorem1-reformulated},
$$\frac{1}{C^{\star}(e^{i\theta})+H_Z(e^{i\theta})}-\lambda\overline{C^{\star}(e^{i\theta})}$$
is causal, which means that
\begin{equation}  \label{jian-causal}
\frac{1}{C^{\star}(z)+H_Z(z)}-\lambda C^{\star}(z^{-1})=\frac{1-\lambda S_Y^{\star}(z)+\lambda H_Z(z^{-1})(C^{\star}(z)+H_Z(z))}{C^{\star}(z)+H_Z(z)}
\end{equation}
is analytic on $\mathbb{D}$, which, together with the fact that $C^{\star}(z)+H_Z(z)$ has the factor of $\prod_{i=1}^{\infty}(1-x_i^{-1}z)$ (for this, see (\ref{formofc+hz})), implies that $1-\lambda S_Y^{\star}(z)$ must also have the same factor. By symmetry, $1-\lambda S_Y^{\star}(z)$ must also have the factor $\prod_{i=1}^{\infty}(1-x_i^{-1}z^{-1})$, which means that all $x_i$ and $x_1^{-1}$ are zeros of $1-\lambda S_Y^{\star}(z)$. Since $1-\lambda S_Y^{\star}(z)$ is a rational spectrum with degree at most $2k$, it has at most $2k$ zeros. Therefore, we conclude that
\begin{equation}\label{formofc+hz2}
C^{\star}(z)+H_Z(z)=\frac{\prod_{i=1}^{l}(1-x_i^{-1}z)^{m_i}R(z)}{\prod_{i=1}^{l}(1-\bar{x}_iz)^{m_i}Q(z)},
\end{equation}
where all $x_i$ are distinct with $|x_i|<1$, all $m_i$ are positive integers with $\sum_{i=1}^{l}m_i\le k$.

The causality of
$$
\frac{1}{C^{\star}(e^{i \theta})+H_Z(e^{i \theta})}-\lambda\overline{C^{\star}(e^{i \theta})}
$$
implies that for any $k=1,2,\dots$,
$$
\int_{-\pi}^{\pi}\left(\frac{1}{C^{\star}(e^{i\theta})+H_Z(e^{i\theta})}-\lambda \overline{C^{\star}(e^{i\theta})}\right)e^{ik\theta}\frac{d\theta}{2\pi}=0,
$$
which, together with (\ref{formofc+hz2}), yields
$$
\int_{-\pi}^{\pi}\frac{e^{ik\theta}\prod_{i=1}^{l}(1-\bar{x}_ie^{i\theta})^{m_i}Q(e^{i\theta})}{\prod_{i=1}^{l}(1-x_i^{-1}e^{i\theta})^{m_i}R(e^{i\theta})}\frac{d\theta}{2\pi}=\lambda c^{\star}_k.
$$
Rewriting the above integral as a line integral, we have
$$
\oint_\gamma\frac{z^{k-1}\prod_{i=1}^{l}(1-\bar{x}_iz)^{m_i}Q(z)}{\prod_{i=1}^{l}(1-x_i^{-1}z)^{m_i}R(z)}\frac{dz}{2\pi i}=\lambda c^{\star}_k,
$$
\begin{equation}\label{equationofck}
\oint_\gamma\frac{z^{k-1}\prod_{i=1}^{l}(1-\bar{x}_iz)^{m_i}(-x_i)^{m_i}Q(z)}{\prod_{i=1}^{l}(z-x_i)^{m_i}R(z)}\frac{dz}{2\pi i}=\lambda c^{\star}_k,
\end{equation}
where $\gamma$ is the unit circle. Denote
$$
h(z)\triangleq\frac{\prod_{i=1}^{l}(1-\bar{x}_iz)^{m_i}(-x_i)^{m_i}Q(z)}{R(z)}.
$$
It's easy to check that $h(z)$ is an analytic function on the unit disk since $R(z)$ is stable. Via the Heaviside cover-up method, the integrand of the LHS of (\ref{equationofck}) can be decomposed as
$$
z^{k-1}\sum_{i=1}^{l}\sum_{j=1}^{m_i}\frac{\tilde{h}_{ij}(z)}{(z-x_i)^j},
$$
where $\tilde{h}_{ij}(z)=a_{ij}h(z)$ and
$$
a_{ij}=\left.\frac{1}{(m_i-j)!}\left(\frac{(z-x_i)^{m_i}}{\prod_{s=1}^{l}(z-x_s)^{m_s}}\right)^{(m_i-j)}\right|_{z=x_i}
$$
is a constant depending on $x_i$ and $m_i$. Thus $\tilde{h}_{ij}(z)$ is also an analytic function on the unit disk for all $i, j$. Applying Cauchy's integral formula, we deduce that for any $k$,
$$
\left.\sum_{i=1}^{l}\sum_{j=1}^{m_i}\frac{(\tilde{h}_{ij}(z)z^{k-1})^{(j-1)}}{(j-1)!}\right|_{z=x_i}=\lambda c^{\star}_k,
$$
or equivalently,
\begin{equation}\label{leftofck}
\left.\sum_{i=1}^{l}\sum_{j=1}^{m_i}\frac{\sum_{p=1}^{\min\{j,k\}}C_{j-1}^{p-1}a_{ij}(h(z))^{(j-p)}(z^{k-1})^{(p-1)}}{(j-1)!}\right|_{z=x_i}=\lambda c^{\star}_k.
\end{equation}
Hence, each $c^{\star}_k$ takes the following form
\begin{equation}\label{formofck}
  \sum_{i=1}^{l}\sum_{j=1}^{\min\{m_i,k\}}\tilde{y}_{ij}(k-1)\cdots(k-j+1)x_i^{k-j},
\end{equation}
where $\tilde{y}_{ij}$ is a constant independent of $k$, which immediately implies that
\begin{align}
C^{\star}(z)=&\sum_{k=1}^{\infty}c^{\star}_k z^k\nonumber\\
=&\sum_{k=1}^{\infty}\sum_{i=1}^{l}\sum_{j=1}^{\min\{m_i,k\}}\tilde{y}_{ij}(k-1)\cdots(k-j+1)x_i^{k-j}z^k\nonumber\\
=&\sum_{i=1}^{l}\sum_{j=1}^{m_i}\sum_{k=j}^{\infty} \tilde{y}_{ij}(k-1)\cdots(k-j+1)x_i^{k-j}z^k\nonumber\\
=&\sum_{i=1}^{l}\sum_{j=1}^{m_i}\frac{y_{ij}z^j}{(1-x_iz)^j},\label{formofc}
\end{align}
where $y_{ij} \triangleq \tilde{y}_{ij}/(j-1)!$. Hence, together with (\ref{formofc+hz2}),
\begin{align*}
C^{\star}(z)+H_Z(z)=&\frac{\prod_{i=1}^{l}(1-x_i^{-1}z)^{m_i}R(z)}{\prod_{i=1}^{l}(1-\bar{x}_iz)^{m_i}Q(z)}\nonumber\\
=&\sum_{i=1}^{l}\sum_{j=1}^{m_i}\frac{y_{ij}z^j}{(1-x_iz)^j}+\frac{P(z)}{Q(z)},
\end{align*}
where for the last equality, all $\bar{x}_i$ are replaced by $x_i$, which can be justified by the fact that $\{x_i\}=\{\bar{x}_i\}$, thanks to the fact that $C^{\star}(z)$ has only real-valued coefficients.

We next prove that Conditions i)-iv) are necessary and sufficient for the optimality of $C^{\star}(z)$, which, given (\ref{formofc}), readily follows from Theorem~\ref{theorem1} and some technical computations.

First of all, Condition i) follows from (\ref{formofc}) and Condition i) in Theorem~\ref{theorem1}:
\begin{align*}
&\int_{-\pi}^{\pi}|C^{\star}(e^{i\theta})|^2\frac{d\theta}{2\pi}\\
=&\int_{-\pi}^{\pi}\left|\sum_{k=1}^{l}\sum_{j=1}^{m_k}\frac{y_{kj}e^{ij\theta}}{(1-x_ke^{i\theta})^j}\right|^2\frac{d\theta}{2\pi}\\
=&\int_{-\pi}^{\pi}\sum_{k=1}^{l}\sum_{j=1}^{m_k}\sum_{p=1}^{l}\sum_{q=1}^{m_p}\frac{y_{kj}e^{ij\theta}}{(1-x_ke^{i\theta})^j}\frac{\bar{y}_{pq}e^{-iq\theta}}{(1-\bar{x_p}e^{-i\theta})^q}\frac{d\theta}{2\pi}\\
=&\int_{-\pi}^{\pi}\sum_{k=1}^{l}\sum_{j=1}^{m_k}\sum_{p=1}^{l}\sum_{q=1}^{m_p}\frac{y_{kj}\bar{y}_{pq}e^{ij\theta}}{(1-x_ke^{i\theta})^j(e^{i\theta}-\bar{x_p})^j}\frac{d\theta}{2\pi}\\
=&\oint_\gamma\sum_{k=1}^{l}\sum_{j=1}^{m_k}\sum_{p=1}^{l}\sum_{q=1}^{m_p}\frac{y_{kj}\bar{y}_{pq}z^{j-1}}{(1-x_kz)^j(z-\bar{x_p})^q}\frac{dz}{2\pi i}\\
=&\left.\sum_{k=1}^{l}\sum_{j=1}^{m_k}\sum_{p=1}^{l}\sum_{q=1}^{m_p}y_{kj}\bar{y}_{pq}\left(\frac{z^{j-1}}{(1-x_kz)^j}\right)^{(q-1)}\right|_{z=\bar{x_p}}\\
\stackrel{(a)}{=}&\left.\sum_{k=1}^{l}\sum_{j=1}^{m_k}\sum_{p=1}^{l}\sum_{q=1}^{m_p}y_{kj}y_{pq}\left(\frac{z^{j-1}}{(1-x_kz)^j}\right)^{(q-1)}\right|_{z=x_p}\\
=&P,
\end{align*}
where for (a), we have replaced $\bar{y}_{p q}$ by $y_{p q}$, which can be justified by the fact that $\{y_{p q}\}=\{\bar{y}_{p q}\}$, again due to the fact that $C^{\star}(z)$ has only real-valued coefficients.

\par Second, it follows from (\ref{formofc+hz2}) and (\ref{formofc}) that
\begin{align}
    C^{\star}(z)+H_Z(z)=&\frac{\prod_{i=1}^{l}(1-x_i^{-1}z)^{m_i}R(z)}{\prod_{i=1}^{l}(1-x_iz)^{m_i}Q(z)}\nonumber\\
    =&\sum_{i=1}^{l}\sum_{j=1}^{m_i}\frac{y_{ij}z^j}{(1-x_iz)^j}+\frac{P(z)}{Q(z)},\label{equationofck+hz}
  \end{align}
which immediately implies Condition ii).

\par Condition iii) follows from the fact that the coefficients of each $x_i^{k-j}$ at both sides of (\ref{equationofck}) are equal. More precisely, by (\ref{formofck}), the coefficient of $x_i^{k-j}$ on the right hand side is $(j-1)!(k-1)\cdots(k-j+1)\lambda y_{ij}$. On the other hand, via (\ref{leftofck}), the coefficient of $x_i^{k-j}$ on the LHS of (\ref{equationofck}) is as follows:
$$
\left.(k-1)\cdots(k-j+1)\sum_{p=j}^{m_i}\frac{C_{p-1}^{j-1}a_{ip}(h(z))^{(p-j)}}{(p-1)!}\right|_{z=x_i}.
$$
Condition iii) then immediately follows.

\par Last, Condition iv) follows from Condition iii) of Theorem~\ref{theorem1} and some technical computations.

\par Finally, noting the uniqueness of the output PSD $S^{\star}_Y$ corresponding to the optimal $C^{\star}(z)$ (Theorem~\ref{unique-output-input}) and applying Jensen's formula, we obtain
\begin{align*}
  C_{FB}=&\frac{1}{2}\int_{-\pi}^{\pi}\log S_Y^{\star}(e^{i\theta})\frac{d\theta}{2\pi}\\
  =&\frac{1}{2}\int_{-\pi}^{\pi}\log\prod_{j=1}^{l}|x_j|^{-2m_j}\left|\frac{\prod_{t=1}^{k}(1-r_te^{i\theta}))}{\prod_{t=1}^{k}(1+\beta_te^{i\theta})}\right|^2\frac{d\theta}{2\pi}\\
  =&-\log\prod_{i=1}^{l}|x_i|^{m_i}.
\end{align*}
The proof of Theorem~\ref{theorem3} is then complete.
\end{proof}

\begin{rem}
By Theorem~\ref{theorem3}, to compute the ARMA($k$) Gaussian feedback capacity, one needs to first find a solution to one of the following systems of rational equations: for some positive $m_1, m_2, \dots, m_l$ with $\sum_{j=1}^{l} m_i \leq k$,
\begin{equation}  \label{soe}
\hspace{-5mm} \left\{\begin{array}{l}
\left.\sum_{i=1}^{l}\sum_{j=1}^{m_i}\sum_{p=1}^{l}\sum_{q=1}^{m_p}y_{ij}y_{pq}\left(\frac{z^{j-1}}{(1-x_iz)^j}\right)^{(q-1)}\right|_{z=x_p}=P,\\
f(x_j)=0, \quad j=1, 2, \dots, l\\
y_{ij}h_{11}(x_1) (j-1)! =y_{11}h_{ij}(x_{i}),\quad i=1, 2, \cdots, l,\\
\qquad\qquad\qquad\qquad\qquad\qquad\qquad\quad j=1, 2,\dots, m_i,
\end{array}\right.
\end{equation}
such that $|x_i| < 1$ for all $i$ and it also satisfies Condition iv) in Theorem~\ref{theorem3} to compute the capacity with (\ref{armak-capacity-formula}).
\end{rem}

\section{Examples and Numerical Results} \label{examples-section}

In this section, we give a couple of examples and some numerical results.

\begin{exmp} \label{arma1-example}
\par When $k=1$, both $l$ and $m_l$ are necessarily $1$, and the corresponding system of equations is:
$$
\left\{\begin{array}{l}
\frac{y_{11}^2}{1-x_1^2}=P,\\
\frac{y_{11} x_1}{1-x_1^2}+\frac{1+\alpha x_1}{1+\beta x_1}=0,\\
\end{array}\right.
$$
which immediately gives rise to (\ref{arma1-equation}). An elementary analysis (see, e.g.,~\cite{kim10} or~\cite{liu16}) will show that Condition iv) of Theorem~\ref{theorem3} translates to (\ref{conditionofx}), an extra condition $x$ has to satisfy. It turns out that for this case, $x_1$ is unique, which, by (\ref{armak-capacity-formula}), yields
$$
C_{FB}=-\log|x_1|.
$$
So, Theorem~\ref{theorem3} recovers Theorem~\ref{arma1-theorem} as a special case.
\end{exmp}

\begin{exmp} \label{arma2-example}
\par When $k=2$, by Theorem~\ref{theorem3}, we have three cases to deal with:
\begin{enumerate}
\item \underline{$l=1$ and $m_1=1$}: We need to find $|x_1| < 1$, $y_{11} \neq 0$ such that
\begin{equation} \label{soe1}
\left\{\begin{array}{l}
\frac{y_{11}^2}{1-x_1^2}=P,\\
\frac{y_{11} x_1}{1-x_1^2}+\frac{(1+\alpha_1 x_1)(1+\alpha_2 x_1)}{(1+\beta_1 x_1)(1+\beta_2 x_1)}=0,\\
\end{array}\right.
\end{equation}
and for all $\theta\in[-\pi,\pi)$,
{\scriptsize$$\frac{x_1 (1+\beta_1 x_1)(1+\beta_2 x_1)(x_1^2-1)}{y_{11} (1-r_1 x_1)(1-r_2 x_1)}\ge |x_1|^2\left|\frac{(1+\beta_1 e^{i\theta})(1+\beta_2 e^{i\theta})}{(1-r_1 e^{i\theta})(1-r_2 e^{i\theta})}\right|^2,
$$}
\noindent where $r_1+r_2=x_1-x_1^{-1}-\alpha_1-\alpha_2-y_{11}$ and $r_1 r_2=\alpha_1\alpha_2 x_1^2-\beta_1 \beta_2 x_1 y_{11}$. If such $x_1$ exists, we have
$$
C_{FB}=-\log|x_1|.
$$

\item \underline{$l=1$ and $m_1=2$}: We need to find $|x_1| < 1$ and $y_{11}, y_{12} \neq 0$ such that
\begin{equation} \label{soe2}
\left\{\begin{array}{l}
  \frac{y_{11}^2}{1-x_1^2}+\frac{y_{12}^2 (1+x_1^2)}{(1-x_1^2)^2}+\frac{2y_{11}y_{12}x_1}{(1-x_1^2)^2}=P,\\
    \frac{y_{11} x_1}{1-x_1^2}+\frac{y_{12} x_1^2}{(1-x_1^2)^2}+\frac{(1+\alpha_1 x_1)(1+\alpha_2 x_1)}{(1+\beta_1 x_1)(1+\beta_2 x_1)}=0,\\
      \;\; y_{11} w(x_1)=y_{12} w^{(1)}(x_1),
  \end{array}\right.
\end{equation}
and for all $\theta\in[-\pi,\pi)$
  $$
  \frac{w(x_1)}{y_{12}}\ge |x_1|^4\left|\frac{(1+\beta_1e^{i\theta})(1+\beta_2e^{i\theta})}{(1-r_1e^{i\theta})(1-r_2e^{i\theta})}\right|^2,
  $$
where
  $$
  w(z) \triangleq \frac{x_1^2(1-x_1 z)^2(1+\beta_1z)(1+\beta_2z)}{(1-r_1z)(1-r_2z)},
  $$
and $r_1+r_2=2x_1-2x_1^{-1}-\alpha_1-\alpha_2-y_{11}$ and $r_1r_2=\alpha_1\alpha_2x_1^4-\beta_1\beta_2x_1^3y_{11}-\beta_1\beta_2x^2y_{12}$. If such $x_1, y_{11}, y_{12}$ exist, then we have
$$
C_{FB}= - \log |x_1|^2.
$$

\item \underline{$l=2$ and $m_1=1$, $m_2=1$}: We need to find distinct $|x_1|, |x_2| < 1$ and $y_{11}, y_{21} \neq 0$ such that
\begin{equation} \label{soe3}
\hspace{-9mm}\left\{\begin{array}{l}
  \frac{y_{11}^2}{1-x_1^2}+\frac{y_{21}^2}{1-x_2^2}+\frac{2y_{11}y_{21}}{1-x_1x_2}=P,\\
    \frac{y_{11} x_1}{1-x_1^2}+\frac{y_{21} x_1}{1-x_1x_2}+\frac{(1+\alpha_1x_1)(1+\alpha_2x_1)}{(1+\beta_1x_1)(1+\beta_2x_1)}=0,\\
    \frac{y_{11} x_2}{1-x_1x_2}+\frac{y_{21} x_2}{1-x_2^2}+\frac{(1+\alpha_1 x_2)(1+\alpha_2 x_2)}{(1+\beta_1 x_2)(1+\beta_2 x_2)}=0,\\
    \frac{y_{11}(1-r_1x_1)(1-r_2x_1)}{(1+\beta_1x_1)(1+\beta_2x_1)(1-x_1^2)}=-\frac{y_{21}(1-r_1x_2)(1-r_2x_2)}{(1+\beta_1x_2)(1+\beta_2x_2)(1-x_2^2)},
  \end{array}\right.
\end{equation}
and for all $\theta\in[-\pi, \pi)$,
  \begin{adjustwidth}{-1.5em}{0em}
    {\tiny$$\frac{x_1x_2(1+\beta_1x_1)(1+\beta_2x_1)(1-x_1^2)(1-x_1x_2)}{(x_1-x_2)y_{11}(1-r_1x_1)(1-r_2x_1)}\ge |x_1x_2|^2\left|\frac{(1+\beta_1e^{i\theta})(1+\beta_2e^{i\theta})}{(1-r_1e^{i\theta})(1-r_2e^{i\theta})}\right|^2$$}
  \end{adjustwidth}
where $r_1+r_2=x_1+x_2-x_1^{-1}-x_2^{-1}-\alpha_1-\alpha_2-y_{11}-y_{21}$ and $r_1r_2=\alpha_1\alpha_2x_1^2x_2^2-\beta_1\beta_2x_1^2x_2y_{21}-\beta_1\beta_2x_1x_2^2y_{11}$. If such $x_1, x_2, y_{11}, y_{21}$ exist, then we have
$$
C_{FB}=-\log|x_1 x_2|.
$$
\end{enumerate}
Complicated as they may look, the systems of equations in (\ref{soe1}), (\ref{soe2}) and (\ref{soe3}) all have finitely many solutions for generic $\alpha_1, \alpha_2, \beta_1, \beta_2$ and therefore can be numerically solved (for instance, Bertini~\cite{BHSW06}, a numerical algebraic geometry package, can be used to efficiently find their zero-dimensional roots). Below, fixing $P=1$, $\alpha_2=0.1$, and $\beta_2=0$, assuming different values for $\beta_1$, we have plotted the values of $C_{FB}$ against the values of $\alpha_1$.

\begin{figure}[htbp!]
\centering
\includegraphics [scale=0.7]{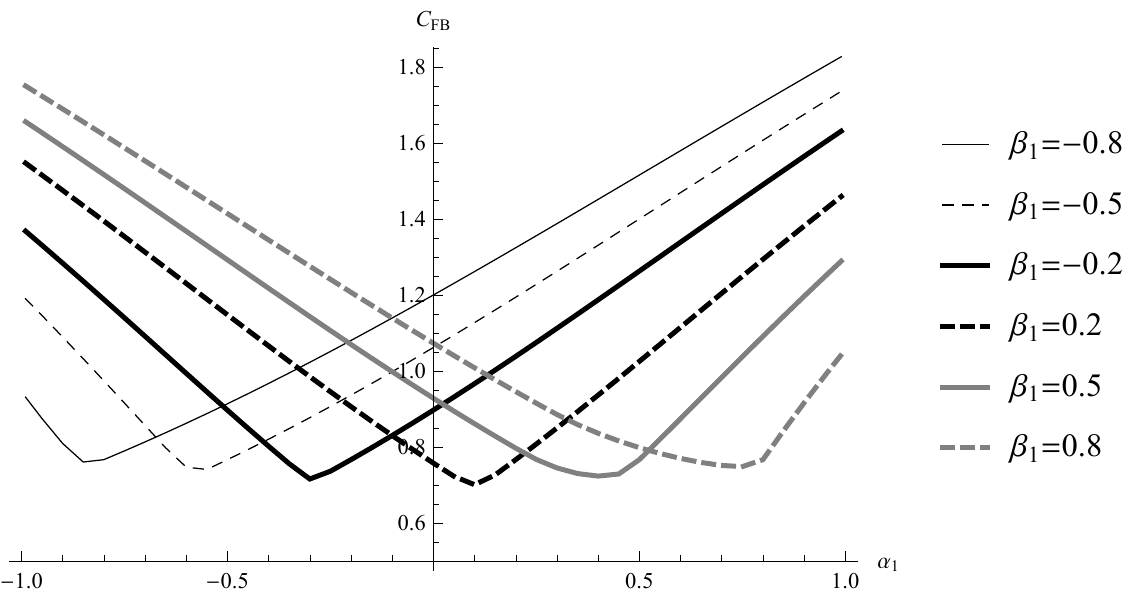}
\caption{Plot of $C_{FB}$ as a function of $\alpha_1$ when $\alpha_2=0.1$}
\end{figure}
\end{exmp}

\begin{exmp} \label{arma3-example}
As evidenced in Example~\ref{arma2-example}, solving the polynomial system in (\ref{soe}) will yield the ARMA($k$) Gaussian feedback capacity. Nevertheless, the computational complexity drastically increases as $k$ gets larger. Our observation is that with this approach, the computation can be measured in minutes (for moderate computing power) for $k=2$, but it will be measured in days for $k=3$. In this example, we demonstrate the effectiveness of Algorithm~\ref{sequential-algorithm} in terms of computing/estimating Gaussian feedback capacity. Apparently this algorithm works for much more general settings, but for the purpose of comparison, we will also focus on applying the algorithm to compute the ARMA($k$) Gaussian feedback channels.

We first discuss a couple of technical issues for the implementation of Algorithm~\ref{sequential-algorithm}.

The first issue is about the form that $C(z)$ should take for implementing the algorithm. Note that, albeit explicit, the expression as in (\ref{formofarmak}) gives different forms for different $l$ and $m_1, m_2, \dots, m_l$, which will create technical problems for Step 2), where the recursive computation of $\{C^{(n)}(e^{i \theta})\}$ is conducted. One way to circumvent this issue is to adopt the following unified form:
\begin{equation} \label{unified-form}
\frac{\sum_{n=1}^{k} \hat{y}_n e^{in\theta}} {\prod_{n=1}^{k}(1-\hat{x}_n e^{i\theta})},
\end{equation}
where $\hat{y}_n$ are complex numbers and $\hat{x}_n$ are complex numbers inside unit circle. One verifies that the above form encompasses all the possible cases in (\ref{formofarmak}).

As in Remark~\ref{global-by-local}, as there does not seem to exist an effective way to find the global minimum for (\ref{op3}), we instead update the sequence $\{C^{(n)}(e^{i \theta})\}$ by a local minimum in (\ref{op3}) via some gradient-descent like method. This, however, create another problem for choosing the initial $C^{(0)}(e^{i \theta})$; more specifically, if $C^{(0)}(e^{i\theta})$ is chosen such that $C^{(0)}(e^{i\theta})+H_Z(e^{i\theta})$ has no zeros inside the unit circle, and thereby any $C(e^{i\theta})$ ``close'' to $C^{(0)}(e^{i\theta})$, $C(e^{i\theta})+H_Z(e^{i\theta})$ will likely not have zeros inside the unit circle either. Then by Jensen's formula,
$$
\int_{-\pi}^{\pi}\frac{|C^{(0)}(e^{i\theta})+H_Z(e^{i\theta})|^2}{|C(e^{i\theta})+H_Z(e^{i\theta})|^2} \frac{d\theta}{2\pi}\ge\int_{-\pi}^{\pi}\log\frac{|C^{(0)}(e^{i\theta})+H_Z(e^{i\theta})|^2}{|C(e^{i\theta})+H_Z(e^{i\theta})|^2} \frac{d\theta}{2\pi}+1 \ge 1.
$$
Therefore, it is difficult to use a gradient-like method to find a feasible $C^{(1)}(e^{i\theta})$ such that
$$
\int_{-\pi}^{\pi}\frac{|C^{(0)}(e^{i\theta})+H_Z(e^{i\theta})|^2}{|C^{(1)}(e^{i\theta})+H_Z(e^{i\theta})|^2} \frac{d\theta}{2\pi} < 1,
$$
not to mention to find a local minimum point $C^{(1)}(e^{i\theta})$. To overcome this issue, one can further assume $C^{(0)}(e^{i \theta})$ is chosen such that $C^{(0)}(e^{i\theta})+H_Z(e^{i\theta})$ has at least one zero (denote by $s$ below) inside the unit circle, that is,
\begin{align}
C^{(0)}(e^{i\theta})+H_Z(e^{i\theta}) & = \frac{\sum_{n=1}^{k} \hat{y}_n e^{in\theta}}{\prod_{n=1}^{k}(1-\hat{x}_n e^{i\theta})}+\frac{\prod_{n=1}^{k}(1+\alpha_n e^{i\theta})}{\prod_{n=1}^{k}(1+\beta_n e^{i\theta})} \nonumber \\
&=\frac{(1-s^{-1} e^{i\theta})(1+\sum_{n=1}^{2k-1} \gamma_n e^{in\theta})}{\prod_{n=1}^{k}(1-\hat{x}_n e^{i\theta})(1+\beta_n e^{i\theta})},
\end{align}
where $|s| < 1$, $\gamma_1, \gamma_2, \dots, \gamma_{2k-1}$ are appropriately chosen complex numbers.

With these two issues addressed, Algorithm~\ref{sequential-algorithm} can be efficiently implemented to yield a lower bound (denoted by $C_{FB}^{(low)}$) on the Gaussian feedback capacity. We observe that for the ARMA($k$) channels, $k=1, 2$, the implemented algorithm actually quickly converges to the true capacity; moreover, it can also handle larger $k$'s within reasonably short time (measured in hours with moderate computing pwoer). Below, fixing $P=10$, $\alpha_1=0.3$, $\alpha_2=0.4$, $\beta_1=-0.3$, $\beta_2=0.7$, assuming different values for $\alpha_3$, we have plotted the values of $C_{FB}^{(low)}$ against the values of $\beta_3$.
\begin{figure}[htbp!]
\centering
\includegraphics [scale=0.7]{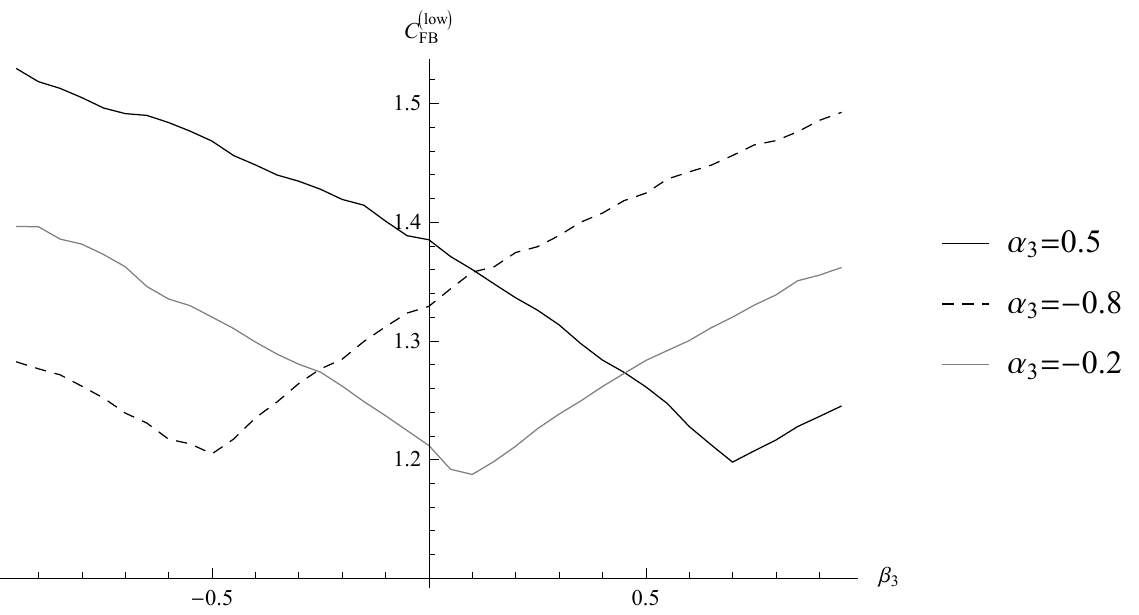}
\caption{Plot of $C_{FB}^{(low)}$ as a function of $\beta_3$}
\end{figure}
\end{exmp}

\section*{Appendices} \appendix

\section{Proof of Theorem~\ref{another-necessary-sufficient}} \label{proof-another-necessary-sufficient}

For the necessity part, we directly use the method of Lagrangian multiplier. Consider the Lagragian of (\ref{op2})
\begin{equation}
L(c,\lambda)=\int_{-\pi}^{\pi}\frac{\lambda|C(e^{i\theta})+H_Z(e^{i\theta})|^2}{|C^{\star}(e^{i\theta})+H_Z(e^{i\theta})|^2}\frac{d\theta}{2\pi}-\left(\int_{-\pi}^{\pi}|C(e^{i\theta})|^2\frac{d\theta}{2\pi}-P\right).\label{lagragian}
\end{equation}
Apparently $C^{\star\star}(e^{i \theta})$ satisfies the KKT condition, that is,
$$
\int_{-\pi}^{\pi}|C^{\star\star}(e^{i\theta})|^2 \frac{d\theta}{2\pi}= P,
$$
and for any $k=1,2,\dots$,
$$
\int_{-\pi}^{\pi}2e^{ik\theta}\left(\frac{\lambda(\overline{C^{\star\star}(e^{i \theta})}+\overline{H_Z(e^{i \theta})})}{|C^{\star}(e^{i \theta})+H(e^{i \theta})|^2}-\overline{C^{\star\star}(e^{i \theta})}\right)\frac{d\theta}{2\pi}=0,
$$
which yield (\ref{c-dstar-power}) and (\ref{c-dstar-strong-orthogonality}), respectively. Furthermore, the infinite-dimensional Hessian matrix $H$ of $L(c, \lambda)$ can be computed as
$$
H_{k,k}=\int_{-\pi}^{\pi}\frac{2\lambda}{|C^{\star}(e^{i \theta})+H_Z(e^{i \theta})|^2} \frac{d\theta}{2\pi}-2,
$$
for all feasible $k$, and
$$
H_{k,j}=\int_{-\pi}^{\pi}\frac{2\lambda e^{i|j-k|\theta}}{|C^{\star}(e^{i \theta})+H_Z(e^{i \theta})|^2} \frac{d\theta}{2\pi}
$$
for all all feasible $j \neq k$. Note that $H$ can be decomposed as $2\lambda A-2I$, where
$$
A_{k,j}=\int_{-\pi}^{\pi}\frac{2 e^{i|j-k|\theta}}{|C^{\star}(e^{i \theta})+H_Z(e^{i \theta})|^2} \frac{d\theta}{2\pi}
$$
for all feasible $j, k$. Now, at the global maximum solution $C^{\star\star}(e^{i \theta})=\sum_{j=1}^{\infty} c^{\star\star}_j e^{i j \theta}$, $H$ must satisfy: for any $n$ and any $z=(z_1, z_2,\dots, z_n) \neq \mathbf{0}$ with $\sum_{i=1}^{n} c^{\star\star}_i z_i=0$,
$$
zHz^T=\sum_{j=1}^{n}\sum_{k=1}^{n}H_{k,j}^{(n)} z_j z_k \le 0,
$$
where $H^{(n)}$ the leading principle $n \times n$ submatrix of $H$, i.e., $H^{(n)}=(H_{j, k})_{j, k=1}^n$. It then follows that at most $1$ eigenvalue of $H^{(n)}$ is positive, or equivalently, at most $1$ eigenvalue of $A^{(n)}$ is larger than $1/\lambda$, where $A^{(n)}$ is the leading principle $n \times n$ submatrix of $A$. Denote by $\lambda^{(n)}_2$ the second largest eigenvalue of $A^{(n)}$, then $\lambda^{(n)}_2\le1/\lambda$ for all $n$. It then follows from the well-known fact on the eigenvalue distribution of Toeplitz forms (see, Page 63 of~\cite{grenander58}), $\lambda^{(n)}_2$ converges to $\mathop{esssup}\limits_{\theta\in[-\pi,\pi)} |C^{\star}(e^{i \theta})+H_Z(e^{i \theta})|^{-2}$ as $n$ tends to infinity. Therefore, we conclude that
\begin{equation}\label{condition3}
\lambda\le|C^{\star}(e^{i\theta})+H(e^{i \theta})|^2
\end{equation}
for almost all $\theta \in [-\pi,\pi)$.

For the sufficiency part, we use the same idea as given in the proof in Theorem 4.1 in \cite{kim10}. More precisely, we need to prove that for any $C(e^{i\theta})$ satisfying (\ref{c-power}),
$$
\int_{-\pi}^{\pi} \frac{|C(e^{i\theta})+H_Z(e^{i\theta})|^2}{|C^{\star}(e^{i\theta})+H_Z(e^{i\theta})|^2} \frac{d\theta}{2\pi} \leq \int_{-\pi}^{\pi} \frac{|C^{\star\star}(e^{i\theta})+H_Z(e^{i\theta})|^2}{|C^{\star}(e^{i\theta})+H_Z(e^{i\theta})|^2} \frac{d\theta}{2\pi}.
$$
To see this, note that
\begin{align*}
\int_{-\pi}^{\pi} \frac{|C+H_Z|^2}{|C^{\star}+H_Z|^2} \frac{d\theta}{2\pi} =&\int_{-\pi}^{\pi} \frac{|C^{\star\star}+H_Z+C-C^{\star}|^2}{|C^{\star\star}+H_Z|^2} \frac{d \theta}{2\pi}\\
=&\int_{-\pi}^{\pi} \frac{|C^{\star\star}+H_Z|^2+|C-C^{\star\star}|^2+2(\overline{C^{\star\star}}+\bar{H}_Z)(C-C^{\star\star})}{|C^{\star}+H_Z|^2} \frac{d\theta}{2\pi}\\
=& \int_{-\pi}^{\pi} \frac{|C^{\star\star}+H_Z|^2}{|C^{\star}+H_Z|^2} \frac{d \theta}{2 \pi} + \int_{-\pi}^{\pi} \frac{|C-C^{\star\star}|^2}{|C^{\star}+H_Z|^2} \frac{d \theta}{2 \pi}+ 2 \int_{-\pi}^{\pi} \frac{C(\overline{C^{\star\star}}+\overline{H_Z})}{|C^{\star}+H_Z|^2} \frac{d \theta}{2 \pi}\\
&-2 \int_{-\pi}^{\pi} \frac{C^{\star\star}(\overline{C^{\star\star}}+\overline{H_Z})}{|C^{\star}+H_Z|^2} \frac{d \theta}{2\pi}.
\end{align*}
Note that by (\ref{c-dstar-strong-orthogonality}), we have for almost all $\theta$,
$$
|C^*+H_Z|^2 \geq \lambda,
$$
and
$$
\int_{-\pi}^{\pi} \left(\frac{(\overline{C^{\star\star}}+\overline{H_Z})}{C^{\star\star}+H_Z}- \frac{\overline{C^{\star}}}{\lambda} \right) C^{\star\star} d\theta=0, \quad \int_{-\pi}^{\pi} \left(\frac{(\overline{C^{\star\star}}+\overline{H_Z})}{C^{\star}+H_Z}- \frac{\overline{C^{\star\star}}}{\lambda} \right) C d\theta=0.
$$
It then follows that for any $C(e^{i \theta})$ satisfying (\ref{c-power}), we have
\begin{align*}
\int_{-\pi}^{\pi} \frac{|C+H_Z|^2}{|C^{\star}+H_Z|^2} \frac{d\theta}{2\pi} \leq & \int_{-\pi}^{\pi} \frac{|C^{\star\star}+H_Z|^2}{|C^{\star}+H_Z|^2} \frac{d \theta}{2 \pi} + \frac{1}{\lambda} \int_{-\pi}^{\pi} |C-C^{\star\star}|^2 \frac{d \theta}{2\pi}- \frac{2 P}{\lambda}+\frac{2}{\lambda} \int_{-\pi}^{\pi} C \overline{C^{\star\star}} \frac{d \theta}{2\pi}\\
=& \int_{-\pi}^{\pi} \frac{|C^{\star\star}+H_Z|^2}{|C^{\star}+H_Z|^2} \frac{d \theta}{2 \pi} + \frac{1}{\lambda} \int_{-\pi}^{\pi} |C|^2 \frac{d\theta}{2\pi}+\frac{1}{\lambda} \int_{-\pi}^{\pi} |C^{\star\star}|^2 \frac{d\theta}{2\pi}\\
&-\frac{2}{\lambda} \int_{-\pi}^{\pi} C \overline{C^{\star\star}} \frac{d \theta}{2\pi}- \frac{2 P}{\lambda}+\frac{2}{\lambda} \int_{-\pi}^{\pi} C \overline{C^{\star\star}} \frac{d \theta}{2\pi}\\
\leq& \int_{-\pi}^{\pi} \frac{|C^{\star\star}+H_Z|^2}{|C^{\star}+H_Z|^2} \frac{d \theta}{2 \pi}.
\end{align*}
The proof of the theorem is then complete.

\end{document}